\newtheorem{theorem}{Theorem}
\newtheorem{proposition}[theorem]{Proposition}
\newtheorem{corollary}[theorem]{Corollary}
\newtheorem{definition}[theorem]{Definition}
\def\MS{\mbox{\tiny MS}}
\def\SS{\mbox{\tiny SS}}
\def\PS{\mbox{\tiny PS}}
\def\CS{\mbox{\tiny CS}}
\def\BGP{\mbox{\tiny BGP}}
\def\uniform{\mbox{\tiny uniform}}
\def\nonuniform{\mbox{\tiny nonuniform}}
\newcommand{\prob}{\mbox{Pr}}
\def\ps@headings{%
\def\@oddhead{\mbox{}\scriptsize\rightmark \hfil \thepage}%
\def\@evenhead{\scriptsize\thepage \hfil \leftmark\mbox{}}%
\def\@oddfoot{}%
\def\@evenfoot{}}
\begin{document}

\title{How Better is Distributed SDN? \\An Analytical Approach\\
\thanks{This research was sponsored by the U.S. Army Research Laboratory and the U.K. Ministry of Defence under Agreement Number W911NF-16-3-0001. The views and conclusions contained in this document are those of the authors and should not be interpreted as representing the official policies, either expressed or implied, of the U.S. Army Research Laboratory, the U.S. Government, the U.K. Ministry of Defence or the U.K. Government. The U.S. and U.K. Governments are authorized to reproduce and distribute reprints for Government purposes notwithstanding any copyright notation hereon.}
}

\author{\IEEEauthorblockN{Ziyao Zhang\IEEEauthorrefmark{1}, Liang 
Ma\IEEEauthorrefmark{2}, Kin K. Leung\IEEEauthorrefmark{1}, Franck Le\IEEEauthorrefmark{2}, Sastry Kompella\IEEEauthorrefmark{3}, and Leandros Tassiulas\IEEEauthorrefmark{4}}
\IEEEauthorblockA{\IEEEauthorrefmark{1}Imperial College London, London, U.K. Email: \{ziyao.zhang15, kin.leung\}@imperial.ac.uk\\
\IEEEauthorrefmark{2}IBM T. J. Watson Research Center, Yorktown Heights, NY, U.S. Email: \{maliang, fle\}@us.ibm.com\\
\IEEEauthorrefmark{3}U.S. Naval Research Laboratory, Washington, DC, U.S. Email: sastry.kompella@nrl.navy.mil\\
\IEEEauthorrefmark{4}Yale University, New Haven, CT, U.S. Email: leandros.tassiulas@yale.edu
}}

\maketitle

\begin{abstract}
Distributed software-defined networks (SDN), consisting of multiple inter-connected network domains, each managed by one SDN controller, is an emerging networking architecture that offers balanced centralized control and distributed operations. Under such networking paradigm, most existing works focus on designing sophisticated controller-synchronization strategies to improve  joint controller-decision-making for inter-domain routing. However, there is still a lack of fundamental understanding of how the performance of distributed SDN is related to network attributes, thus impossible to justify the necessity of complicated strategies. In this regard, we analyze and quantify the performance enhancement of distributed SDN architectures, influenced by intra-/inter-domain synchronization levels and network structural properties. Based on a generic weighted network model, we establish analytical methods for performance estimation under four synchronization scenarios with increasing synchronization cost. Moreover, two of these synchronization scenarios correspond to extreme cases, i.e., minimum/maximum synchronization, which are, therefore, capable of bounding the performance of distributed SDN with any given synchronization levels. Our theoretical results reveal how network performance is related to synchronization levels and inter-domain connections, the accuracy of which are confirmed by simulations based on both real and synthetic networks. To the best of our knowledge, this is the first work quantifying the performance of distributed SDN analytically, which provides fundamental guidance for future SDN protocol designs and performance estimation. 
\end{abstract}
\section{Introduction}
\label{sec:intro}

Software-Defined Networking (SDN) \cite{McKeown}\cite{kreutz2015software}\cite{nunes2014survey}\cite{ethane}, an emerging  networking architecture, significantly improves the network performance due to its programmable network management, easy reconfiguration, and on-demand resource allocation, which has therefore attracted considerable research interests.
One key attribute that differentiates SDN from classical networks is the separation of the SDN's data and control plane. Specifically, in SDN, all control functionalities are implemented and abstracted on the control plane for operational decision making, e.g., flow construction and resource allocation, while the data plane only passively executes the instructions received from the control plane. For a typical SDN architecture, all network decisions are made in the control plane by a control entity, called \emph{SDN controller}, in a centralized manner. Since the centralized SDN controller has the full knowledge of network status, it is able to make the global optimal decision. Yet, such centralized control suffers from major scalability issues. In particular, as a network grows, the number of flow requests and operational constraints are likely to increase exponentially. Such high computation and communication requirements may impose substantial burden on the SDN controller, potentially resulting in significant performance degradation (e.g., delays) or even network failures. 

In this regard, distributed SDN is proposed \cite{gupta2015sdx,kotronis2012outsourcing,kotronis2016stitching,petropoulos2016software,chen2015mlv,thai2013decoupling} to balance the centralized and distributed control. Specifically, a distributed SDN network is composed of a set of subnetworks, referred to as \emph{domains}, each managed by an independent SDN controller.
Moreover, each domain contains several gateways connecting to some other domains; such inter-connected domains then form the distributed SDN architecture. In the distributed SDN, if the controllers do not communicate with each other regarding the network status of their own domains, then the distributed SDN is reduced to the classical multi-AS (Autonomous Systems) network, where the network flows are managed by IGP and BGP protocols. Nevertheless, in the distributed SDN architecture, controllers are expected to exchange information via proactively probing or passive listening. Such additional status information at each controller, called the \emph{synchronized information}, can assist in enhancing decision making for inter-domain tasks. As a special case, when each controller knows the network status in all other domains, i.e., complete synchronization, then all controllers can jointly act as a logically centralized controller, which effectively is the same as the centralized SDN structure as all network decisions are globally optimal. These observations imply that the network performance, e.g., the constructed inter-domain path length, relies heavily on the inter-controller synchronization level. Since complete synchronization among controllers will incur high synchronization costs especially in large networks, practical distributed SDN networks are likely to be able to afford only partial inter-domain synchronization. 

Under partial synchronization, most existing works focus on promoting the inter-domain synchronization so that the final decision making approaches optimality. For instance, information sharing algorithms are proposed in \cite{kotronis2016stitching,petropoulos2016software} for negotiating common traffic policies among various domains. Similarly, efficient frameworks are designed in \cite{chen2015mlv,thai2013decoupling}, aiming to facilitate inter-domain routing selection via fine-grained network status exchanges. 
However, one fundamental question regarding the distributed SDN architecture has generally been ignored: \emph{How does the network performance in distributed SDN relate to network synchronization levels and structural properties?} It is possible that under certain network conditions, e.g., the number of gateways and their connections to external domains, the benefit of increasing the synchronization level is only marginal. Without such fundamental understanding, it is impossible to justify why a complicated mechanism for information sharing or flow construction is necessary in distributed SDN. 
We, therefore, investigate this unsolved yet critical problem in the distributed SDN paradigm, with the goal to quantify its performance under any given network conditions. \looseness=-1

In this paper, we propose a network model to capture intra-/inter-domain connections and non-uniform edge weights in distributed SDN. Such network model is generic in that it only requires degree/weight distribution and the number of gateways in each domain as the input parameters, i.e., it is independent of any specific graph models. 
Based on this network model, we then derive analytical expressions of the network performance, focusing on characterizing the average length (see further discussion in section~\ref{sec:problem_statement}) of the constructed paths with respect to (w.r.t.) random flow requests. Such performance metric is investigated under four canonical synchronization levels, ranging from the minimum to the complete (maximum) synchronization that experience increasing synchronization costs. If a given synchronization scenario cannot be described by any of these four cases, then its performance can always be bounded by our analytical results corresponding to the two extreme cases (i.e., maximum/minimum synchronization). Analytical results reveal that the performance metric is a logarithmic function of the network structural parameters even under the minimum synchronization level. Moreover, the performance gain declines with the increasing synchronization level and the number of gateways. To validate the accuracy of the derived analytical expressions, they are compared against evaluation results using both real and synthetic networks. \looseness=-1
\subsection{Related Work}

The flexibility and scalability of the distributed SDN architecture have stimulated many research efforts in this area. 
In particular, the feasibility of deploying SDN-based mechanisms incrementally to current BGP-glued Internet is considered in \cite{kotronis2012outsourcing}, where routing control planes of multiple domains are outsourced to form centralized control planes for optimizing routing decisions. Similarly, \cite{poularakis2017one} explores the problem of SDN upgrade in ISP (Internet Service Provider) networks
under the constraint of migration costs. 
In addition, protocols and systems, such as  HyperFlow\cite{tootoonchian2010hyperflow} and ONOS\cite{berde2014onos} are proposed to realize logically centralized but physically distributed SDN architecture. Devoflow\cite{curtis2011devoflow} and Kandoo\cite{hassas2012kandoo} are designed to reduce the overheads introduced by the interaction between control and data planes. Moreover, DIFANE\cite{yu2010scalable} and Fibbing \cite{vissicchio2015central} are conceived for limiting the level of centralization and addressing robustness issues, respectively.  From industry research community, Google's B4~\cite{jain2013b4} and Espresso~\cite{yap2017taking}, and Facebook's Edge Fabric~\cite{yap2017taking} are developed to address routing-related challenges in the Internet using SDN-based techniques.
However, most of these works are experiment-based without providing any rigorous mathematical analysis or theoretical guarantees, which therefore motivate us to investigate distributed SDN from the fundamental analytical perspective. \looseness=-1

Since all theoretical results in this paper are obtained based on a weighted graph model, our work is also related to the area of graphical analysis of complex networks. However, most works in these areas are performed on certain graph models and constrained to specific graph properties, e.g., clustering \cite{szabo2003structural}, small-world effect \cite{watts1998collective}, community structure \cite{girvan2002community}, network motif \cite{milo2002network}, scale-free \cite{barabasi2009scale}, etc. In contrast, our network model is substantially generalized with the most relaxed input parameters, i.e., degree and edge distributions. On the other hand, models in \cite{erdos1960evolution,watts1998collective,barabasi1999emergence,newman2001random} are purely randomized, which cannot differentiate intra- and inter-domain links in the context of distributed SDN. Finally, our work is most related to \cite{aleta2017multilayer} and \cite{guo2016levy} as they also consider a layered-network model.
However, the objectives in these papers are the analysis of transport networks and navigation strategies, which are substantially different from our problem. In this paper, we overcome these drawbacks to establish analytical results upon a generic network model that captures all key parameters and synchronization levels in distributed SDN.\looseness=-1

\subsection{Summary of Contributions}
Our main contributions are five-fold.

\emph{1)} We propose a generic two-layer network model capturing intra- and inter-domain connections, and edge weights;

\emph{2)} On top of the network model in \emph{1)}, we study the average length of constructed paths (APL) as the performance metric, and develop the analytical expression (a logarithmic function) of the APL under the minimum synchronization level. 
This result serves as an upper bound of the performance metric; \looseness=-1

\emph{3)} We derive a mathematical expression of the APL when the synchronization level is between the minimum and maximal synchronizations. These expressions give fine-grained quantification on how the performance metric relates to the incremental synchronization changes; 

\emph{4)} We establish an analytical expression of the APL for the maximum synchronization scenario, where all domains are synchronized with each other, i.e., complete synchronization. The theoretical result under such synchronization level provides a lower bound of the performance metric;

\emph{5)} All of above theoretical results are evaluated using real and synthetic networks, both of which confirm their high accuracy as well as their capability in providing new insights into performance changes over various network conditions.

In this paper, 
we do not intend to design improved inter-domain routing mechanisms, and thus only basic and typical routing strategies are employed for theoretical analysis under each of the synchronization scenarios. To the best of our knowledge, this is the \emph{first work} that studies distributed SDN from the analytical perspective. The significance of these results is that they lay a strong theoretical foundation for the research community in distributed SDN.

The rest of the paper is organized as follows. Section~\ref{sec:problemFormulation} formulates the problem. Sections~\ref{sec:APL_MS}--\ref{sec:APL_CS} present analytical results for four different synchronization scenarios, respectively. Evaluations of the derived analytical expressions are conducted in Section~\ref{sec:evaluations}. Finally, Section~\ref{sec:Conclusions} concludes the paper.

\section{Problem Formulation}
\label{sec:problemFormulation} 
\subsection{Network Model}
\label{sec:network_model}
We formulate the distributed SDN network as an undirected graph according to a two-layer network model (Fig.~\ref{fig:2-tier}), where the top-layer abstracts the inter-domain connections, and under such constraints, the bottom-layer characterizes physical connections among all network elements. Specifically, the top-layer is a graph consisting of $m$ vertices, where each vertex represents a domain in the distributed SDN. These $m$ vertices are connected via undirected links according to a given \emph{inter-domain degree distribution}, which refers to the distribution of the number of neighboring domains of an arbitrary domain. The top-layer graph, denoted by $\mathcal{G}_d=(V_d,E_d)$ ($V_d$/$E_d$: set of vertices/edges in $\mathcal{G}_d$, $|V_d|=m$), is called \emph{domain-wise topology} in the sequel. The existence of an edge in $E_d$ connecting two vertices $v_1,v_2\in V_d$ in the domain-wise topology implies that the two network domains corresponding to $v_1$ and $v_2$ are connected. Based on this domain-wise topology, we next construct the physical network in the bottom-layer. In particular, each of the $m$ domains in $\mathcal{G}_d$ corresponds to an undirected graph with $n$ nodes in the bottom-layer; these $n$ nodes are connected following a given \emph{intra-domain degree distribution}, which is the distribution of the number of neighboring nodes of an arbitrary node within the same domain.\footnote{In one domain, some nodes may have connections to other domains; such external connections are not considered in the concept of intra-domain degree.} We also assume that such intra-domain degrees across all domains are independently and identically distributed (i.i.d.). The graph of each domain is referred to as \emph{intra-domain topology}. Then for each $e\in E_d$ with end-points corresponding to  domains $\mathcal{A}_i$ and $\mathcal{A}_j$, we (i) randomly select two nodes $w_1$ from $\mathcal{A}_i$ and $w_2$ from $\mathcal{A}_j$ and connect these two nodes if link $w_1w_2$ does not exist, and (ii) repeat such link construction process between $\mathcal{A}_i$ and $\mathcal{A}_j$ $\beta$ times. By this link construction process, the bottom-layer network topology $\mathcal{G}=(V,E)$ is therefore formed ($V/E$: set of nodes/links in $\mathcal{G}$, $|V|=mn$); see Fig.~\ref{fig:2-tier} for illustrations. In each domain, nodes having connections to other domains are called \emph{gateways}. Note that the above process indicates that the $i$-th selected link may overlap with existing links (i.e., the same end-points); therefore, parameter $\beta$ represents the maximum number of links between any two domains. Hence, if two domains, each with $n$ nodes, are connected in the domain-wise topology, then the expected number of links connecting these two domains is $n^2\big(1-(1-\frac{1}{n^2})^{\beta}\big)$. Without loss of generality, we assume that all inter/intra-domain topologies are connected graphs. \looseness=-1

\begin{figure}[tb]
\centering
\includegraphics[width=3.3in]{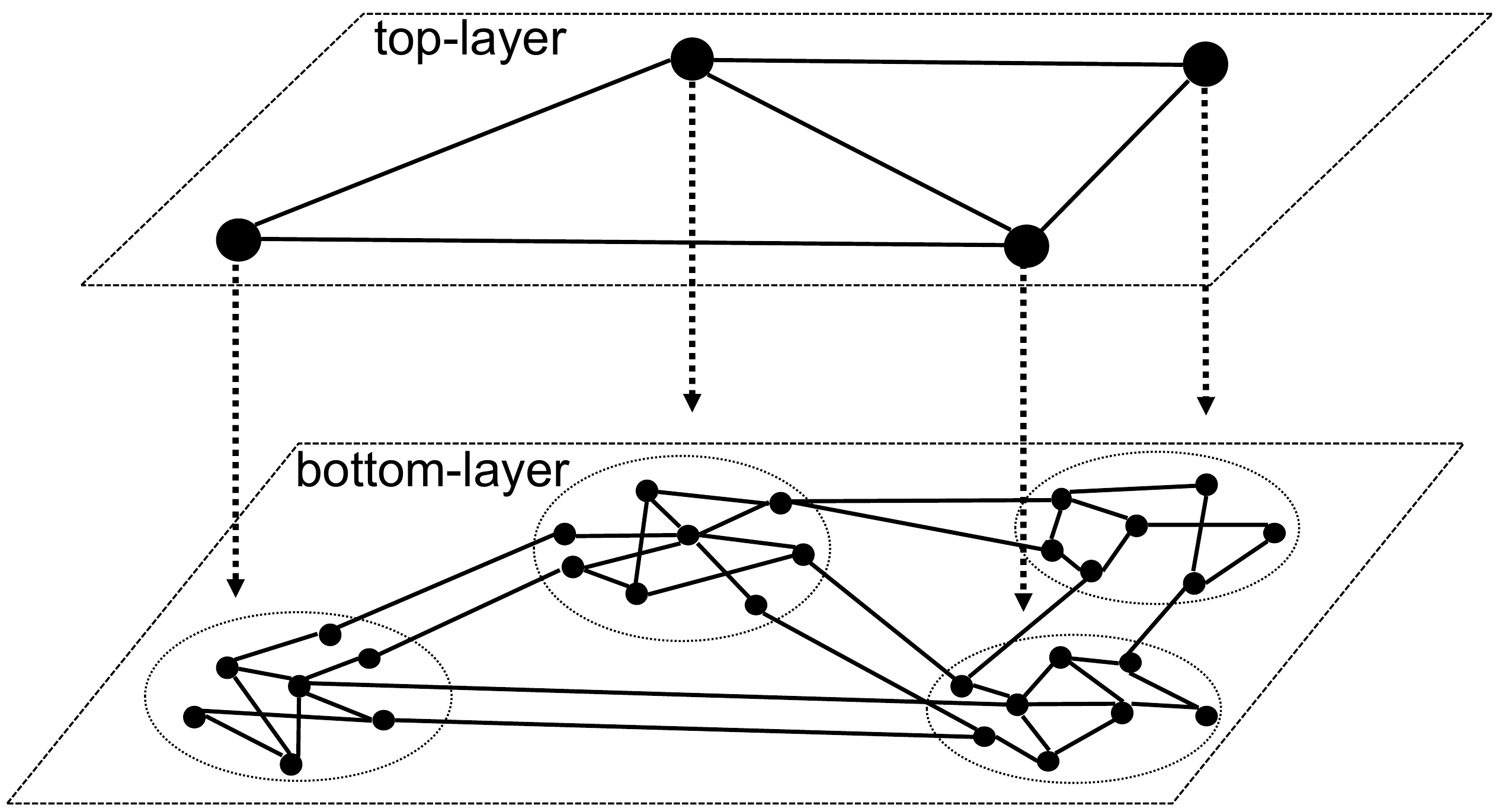}
\caption{Two-layer network model: Top-layer abstracts the domain-wise topology; bottom-layer determines all physical connections in the network.}
\label{fig:2-tier}
\end{figure}

In addition,
we also associate weights to links to capture the corresponding cost, e.g., computation, storage and/or communication cost, used for flow constructions. Specifically, 
we assume intra-domain link weights across all domains are non-negative and i.i.d.. In real distributed SDN environment, unlike the potential wireless links within a domain, inter-domain gateway-to-gateway links are likely to be wired with high bandwidth, thus more stable. In this regard, we characterize all inter-domain link weights by a non-negative constant $C$. Furthermore, without loss of generality, we assume $C=1$; all theoretical results in this paper can be trivially extended to other values of $C$.

\emph{Discussions:} 
Our two-layer network model is generic in that the inputs can be any degree and edge distributions; such distributions can be empirical or extracted from real networks of interest.
Moreover, we do not require inter-domain degree and intra-domain degree to follow the same distribution. \looseness=-1
\subsection{SDN Data and Control Plane}
Thus far, we have only discussed the graphical properties of the distributed SDN networks. One critical aspect of SDN that differentiates it from other networks is the separation of the data and control planes, which are formulated as follows. 

\subsubsection{Data Plane} We exploit graph $\mathcal{G}$ generated by the two-layer network model in Section~\ref{sec:network_model} to represent the data plane of the distributed SDN. Specifically, a node/link exists in $\mathcal{G}$ if and only if it can be used for data transmission in the network.

\subsubsection {Control Plane}
We assume that in this two-layer network model, each domain contains one SDN controller that carries out control operations and facilitates information sharing. Each SDN controller can be one or a collection of intra-domain nodes that are equipped with the controlling functionality (i.e., in-band control \cite{Schiff2016}) or external controlling entities operating on top of a network domain (i.e., out-of-band control \cite{feamster2014}). SDN controllers together with all inter/intra-domain controlling channels form a control plane.

In such network structure, to construct a path between a  pair of source and destination nodes (only \emph{unicast} is considered in this paper), the corresponding routing path is determined by the controllers in the source, destination, and all intermediate domains collectively. However, the performance of the constructed paths may vary, depending on the network status information at each involved controller.

\emph{Remark:} We do not specify the locations of SDN controllers, as they do not affect our theoretical analysis. For the same reason, we do not explicitly visualize the control plane in Fig.~\ref{fig:2-tier} or the rest of this paper. 
\subsection{Synchronization Among SDN Controllers} 
\label{sec:synchScenarios}
As discussed, synchronization levels among SDN controllers directly affects the quality of the constructed paths. We now formally define \emph{synchronization} among SDN controllers.
\begin{definition}
\label{def:synch}
Domain $\mathcal{A}_i$ is \emph{synchronized} with domain $\mathcal{A}_j$ if and only if the SDN controller in $\mathcal{A}_i$ knows the shortest distance (measured by the accumulated weight of shortest path) between any two gateways in $\mathcal{A}_j$.
\end{definition}
By Definition~\ref{def:synch}, clearly there exist exponentially many synchronization levels, i.e., which domains are synchronized with which other domains. However, in real networks, it is usually the case that synchronization difficulty is high when two SDN controllers are far apart. 
In this paper, we study the following synchronization scenarios, sorted by their corresponding synchronization costs. A synchronization scenario corresponds to a specific set of synchronization levels in all domains.  \looseness=-1

\begin{enumerate}[(a)]
\item \emph{{Minimum Synchronization (MS)}}: MS corresponds to the minimal synchronization level. Under MS, no domains synchronize with any other domains. As a result, each controller only knows its own intra-domain topology and the domain-wise topology, but has no knowledge of its intra-domain link weights. This scenario captures IGP routing protocols that do not take into account any link weights but select routes purely based on the hop count  (e.g., Routing Information Protocol (RIPv2)). 
Note that MS corresponds to the minimum network knowledge that is always available,\footnote{This is a valid assumption in existing multi-domain networks, where BGP-like protocols are being used. Specifically, under BGP, intra-domain topology is obtained via techniques such as BGP route reflection \cite{bates2006bgp}, while the domain-wise topology is obtained by external BGP \cite{chandra1996bgp}.} including in scenarios in (b--d) ; 
\item \emph{{Self-domain Synchronization (SS)}}: In addition to the available knowledge provided by MS,  each controller with SS knows nothing more except for the intra-domain link weights (\emph{not} the distribution) in its own domain. With this additional information, one controller can find the optimal intra-domain path for any intra-domain flow request;
\item \emph{{Partial Synchronization (PS)}}: PS refers to any synchronization level that is between SS and the following complete synchronization (CS); 
\item \emph{{Complete Synchronization (CS)}}: under CS, every pair of domains $\mathcal{A}_i$ and $\mathcal{A}_j$ synchronize with each other.
As such, there is effectively one logically centralized controller, which can make globally optimal decisions. Among all these synchronization scenarios, CS experiences the highest synchronization cost.
\end{enumerate}
             
    
    
    
\subsection{Routing Mechanisms}
\label{sec:routing_mechanisms}
We describe a path construction mechanism for each of the above synchronization scenarios (see Sections~\ref{sec:APL_MS}--\ref{sec:APL_CS} for details). The aim of these path construction mechanisms is to minimize the total length of the constructed path between two given nodes. 
Though these routing mechanisms are different, the common rule that governs them is that given a particular synchronization level, each controller makes its own decision as to which nodes and (or) links in its own domain should be selected to construct the source-destination path upon request.
Then the selected path segments in all participating domains concatenate into a cross-domain, end-to-end path. 
\subsection{Problem Statement and Objective}
\label{sec:problem_statement}
Given the distributed SDN network model in Section~\ref{sec:network_model}, our goal is to study the performance of the paths constructed by the routing mechanisms for various synchronization scenarios. In real networks, the performance of routing can be measured by many metrics, such as delay, congestion level, and the number of flows that can be served at the same time, depending on the goal of network management. In order to make our analytical work sufficiently generalized to capture the performance concern that is fundamental to most network management tasks, we exploit the \emph{Average Path Length (APL)}, measured by the average end-to-end accumulated weight of the paths constructed between two arbitrary nodes within different domains in an arbitrary network realization following the two-layer network model, as the performance metric. APL is a natural generalized performance metric, as link weights can always be manipulated to reflect different routing objectives\footnote{APL is of special significance to BGP, as AS-PATH and NEXT-HOP attributes in BGP are both related to APL.}.
Formally, our research objective is:

\textbf{\emph{Objective:}} Suppose (i) each network realization following the two-layer network model exists with the same probability, and (ii) the source-destination node pair belonging to two different domains in a given network realization also exist with the same probability. Our goal is to derive the mathematical expression of 
APL for each of the four synchronization scenarios, namely MS, SS, CS, and PS, in Section~\ref{sec:synchScenarios}. 

Note that we are only interested in studying the cross-domain routing here. For intra-domain routing, the corresponding controller can easily find the optimal paths without relying on inter-controller synchronizations.\looseness=-1

\emph{Remark:} It is important to notice that our two-layer network model is a random graph model, i.e., there exist multiple network realizations satisfying the same set of input parameters. Therefore, APL is an expected value over not only random source/destination node pairs but also random network realizations. All our theoretical results on APL are based on the given network parameters (e.g., degree and weight distributions) rather than a specific network realization.

Main notations used in this paper are summarized in Table~\ref{tb:main_notations}.

\begin{table}[tb]
\renewcommand{\arraystretch}{1.3}
\caption{Main Notations.} 
\label{tb:main_notations}
\centering
\begin{tabular}{r|m{6.5cm}}
  \hline
  \textbf{Symbol} & \textbf{Meaning} \\
  \hline
  $n$ &  number of nodes in a domain\\
  \hline
  $m$ &  number of domains in the network \\
  \hline
   $\beta$ & maximum number of edges connecting two domains\\
 \hline
  $l$ & average shortest path length between a non-gateway and the closest gateway in a domain\\
  \hline
  $\Delta$ & average shortest domain-wise path length between two arbitrary domains \\
 \hline
 $\mathcal{D}$ & random variable representing the shortest distance between two random nodes in a domain\\
 \hline
  $D_k^{(\beta)}$ & random variable representing the APL between two arbitrary nodes in the end-domains of a bus network of length $k$\\
  \hline
  $M_k^{(\beta)}$ & random variable representing the APL between an arbitrary node in an end-domain and the closest gateway (connecting to external domains) in the other end-domain of a bus network of length $k$\\
  \hline
  $L_k(\beta)$ & expectation of random variable $D_k^{(\beta)}$\\
  \hline
\end{tabular}
\end{table}

    
\section{Average Path Length under Minimum Synchronization}
\label{sec:APL_MS}

In this section, we study the APL under MS, for which we  describe the corresponding routing mechanism and then present its performance analysis. For ease of presentation, we first introduce the following definitions and notations. 
\begin{definition}
\begin{enumerate}
\item In the domain-wise topology $\mathcal{G}_d$, the vertex corresponding to domain $\mathcal{A}$ in $\mathcal{G}$ is denoted by $\vartheta(\mathcal{A})$;
\item Given a pair of source and destination nodes $v_1$ and $v_2$ with $v_1\in \mathcal{A}_1$, $v_2\in \mathcal{A}_2$, and $\mathcal{A}_1\neq \mathcal{A}_2$, the domain-wise path w.r.t. $v_1$ and $v_2$ is a path in $\mathcal{G}_d$ starting at vertex $\vartheta(\mathcal{A}_1)$ and terminating at vertex $\vartheta(\mathcal{A}_2)$;
\item The shortest path between node $w$ and set $S$ is the shortest path in set $\{\mathcal{P}_s$: shortest path between node $w$ and node $s$, $s\in S\}$.
\end{enumerate}
\end{definition}
\subsection{Routing Mechanism under MS (RMMS)}
\label{sec:Routing_MS} 
A path between two nodes $v_1$ and $v_2$ can be constructed in the following way, called \emph{Routing Mechanism under MS (RMMS)}, which is similar to the BGP for inter-domain routing and the IGP for intra-domain routing\footnote{Note that BGP routing policies typically reflects the commercial agreements among the domains. For generality purpose, we do not consider these factors in our routing mechanisms.}. 

\begin{enumerate}[(i)]
\item Select the shortest domain-wise path w.r.t. $v_1$ and $v_2$, which goes through vertices $\vartheta(\mathcal{A}_1),\vartheta(\mathcal{A}_2),\ldots,\vartheta(\mathcal{A}_q)$ in $\mathcal{G}_d$ ($v_1\in\mathcal{A}_1,v_2\in\mathcal{A}_q$), with ties (if any) broken arbitrarily. In other words, no domain-wise path from $\vartheta(\mathcal{A}_1)$ to $\vartheta(\mathcal{A}_q)$ traverses less than $q$ domains.
\item In domain $\mathcal{A}_i$, let $w$ be (i) the source node $v_1$ if $i=1$, or (ii) the starting point (also known as ingress node) selected by the controller in $\mathcal{A}_{i-1}$ if $i\geq 2$. In addition, let (i) set $S=\{v_2\}$ if $i=q$, or (ii) $S=\{g: g \text{ is a gateway that has a link connecting to } \mathcal{A}_{i+1}\}$ if $i\leq q-1$. Then the controller in $\mathcal{A}_i$ selects the shortest path\footnote{This is similar to several IGPs, such as Open Shortest Path First (OSPF), RIP, RIPv2, and Intermediate System to Intermediate System (IS-IS) \cite{vetriselvan2014survey}.} $\mathcal{P}_i$, in terms of the number of hops (link weights are not available, thus not considered) with ties broken randomly, from node $w$ to set $S$ using only nodes and links in $\mathcal{A}_i$. Let $s'\in S$ denote the end-point of $\mathcal{P}_i$. If $i\leq q-1$, then the controller in $\mathcal{A}_i$ further appends a random node $u\in\{w:w\in \mathcal{A}_{i+1}, s'w\in \mathcal{G}\}$ to path $\mathcal{P}_i$, thus forming path $\mathcal{P}'_i$ ($u$ is then the ingress node in $\mathcal{A}_{i+1}$); otherwise $\mathcal{P}'_i=\mathcal{P}_i$.
\item The final end-to-end path $\mathcal{P}$ connecting $v_1$ and $v_2$ is 
\begin{equation}
\label{eq:pathConcatenation}
\mathcal{P}=\mathcal{P}'_1+\mathcal{P}'_2+\ldots +\mathcal{P}'_q; 
\vspace{-.5em}
\end{equation}
see Fig.~\ref{fig:MS_example} for the example.
\end{enumerate}

\begin{figure}[tb]
\centering
\includegraphics[width=3.5in]{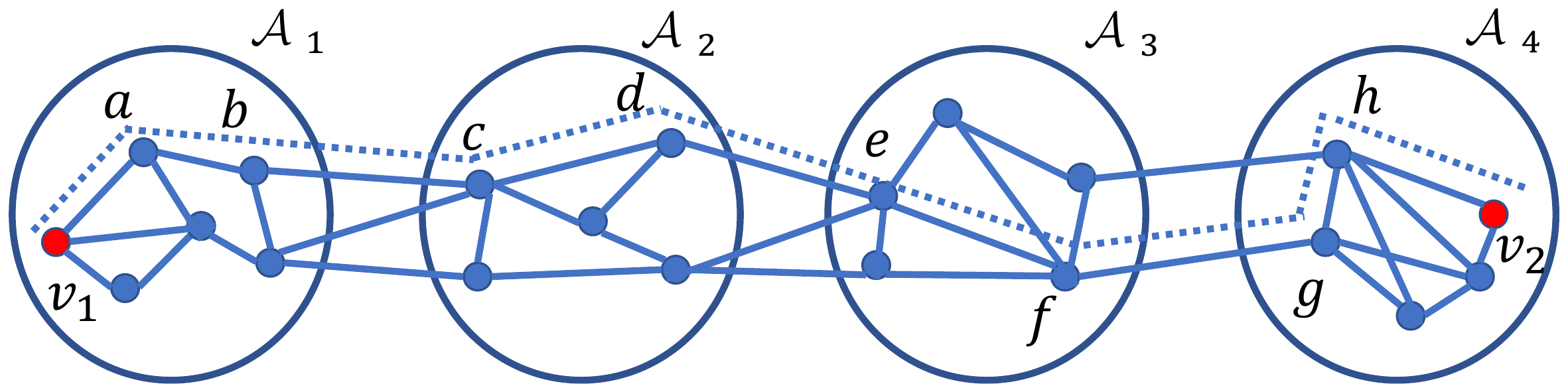}
\caption{Path construction under RMMS w.r.t. $v_1$ and $v_2$, whose shortest domain-wise path traverses $\mathcal{A}_1$, $\mathcal{A}_2$, $\mathcal{A}_3$, and $\mathcal{A}_4$. Then $\mathcal{P}_1 = v_1ab$, $\mathcal{P}'_1 = v_1abc$, $\mathcal{P}_2 = cd$, $\mathcal{P}'_2 = cde$, $\mathcal{P}_3=ef$, $\mathcal{P}'_3=efg$, $\mathcal{P}_4 = \mathcal{P}'_4 =ghv_2$. The constructed path is $\mathcal{P} = \mathcal{P}'_1 +\mathcal{P}'_2 + \mathcal{P}'_3 +\mathcal{P}'_4$, denoted by dotted line segments.\looseness=-1}
\label{fig:MS_example}
\end{figure}

\subsection{APL under Minimum Synchronization (MS)} 
\label{subsec:APL_MS}
With RMMS, we are now ready to analyze APL for MS. The basic idea is that we first compute the average domain-wise path length w.r.t. two arbitrary source/destination nodes. Then in a domain-wise path with such average length, we calculate the average number of hops in each traversed domain, and add them together to get the final estimation of APL for MS.
To this end, we first present the results in the existing work \cite{newman2001random} that can assist our mathematical analysis. 

\begin{proposition}\cite{newman2001random}
\label{prop:existingWork}
In an undirected graph $\mathcal{H}$ with $n_0$ vertices and the vertex degree satisfying a given distribution, let $x_i$ be the average number of vertices that are $i$-hop away from a random vertex in $\mathcal{H}$. Suppose all edge weights are $1$, and $x_2\gg x_1$. Then
\begin{enumerate}
\item 
\begin{equation}
\label{eq:xi}
x_{i} = ({x_{2}}/{x_{1}})^{i-1}x_{1};
\vspace{-.5em}
\end{equation}
\item APL in $\mathcal{H}$ is 
\begin{equation}
\label{eq:APLinH}
\frac{\log(n_0/x_{1})}{\log(x_{2}/x_{1})} + 1.
\vspace{-.5em}
\end{equation}
\end{enumerate}
\end{proposition}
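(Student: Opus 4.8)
The plan is to treat the neighbourhood of a random vertex in $\mathcal{H}$ as a branching process, which is the standard generating-function approach for random graphs with a prescribed degree distribution. Let $p_k$ denote the degree distribution and define $G_0(z)=\sum_k p_k z^k$, so that the mean degree equals $G_0'(1)=x_1$. When one arrives at a vertex by traversing a random edge, the number of \emph{additional} outgoing edges (the excess degree) is governed by the generating function $G_1(z)=G_0'(z)/G_0'(1)$; its mean, $G_1'(1)$, is the expected number of fresh neighbours gained per step, and a direct computation identifies it as $x_2/x_1$.

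For part~1 I would argue by induction on $i$. Under the hypothesis $x_2\gg x_1$ the local structure around a vertex is tree-like, so as long as the explored set remains small the sets of $i$-hop neighbours are (in expectation) disjoint and each vertex discovered at hop $i$ contributes, on average, $G_1'(1)=x_2/x_1$ new vertices at hop $i+1$. Formally, the generating function for the number of $i$-hop neighbours is the $(i-1)$-fold composition $G_0(G_1(\cdots G_1(z)\cdots))$, and differentiating at $z=1$ via the chain rule multiplies by a factor $G_1'(1)$ at each composition. This yields
\[
x_i = G_0'(1)\,\bigl(G_1'(1)\bigr)^{i-1} = x_1\,(x_2/x_1)^{i-1},
\]
with the base cases $i=1,2$ checked directly, which is exactly \eqref{eq:xi}.

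For part~2 I would estimate the typical distance $\ell$ between two random vertices as the radius at which the explored neighbourhood first exhausts the graph. The expected number of vertices within distance $\ell$ is $1+\sum_{i=1}^{\ell} x_i$, a geometric sum with ratio $x_2/x_1$. Because $x_2\gg x_1$ this ratio is large, so the sum is dominated by its final term $x_\ell$; setting $x_\ell\approx n_0$ gives $x_1\,(x_2/x_1)^{\ell-1}=n_0$, and taking logarithms and solving for $\ell$ produces $\ell=\log(n_0/x_1)/\log(x_2/x_1)+1$, matching \eqref{eq:APLinH}.

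The step I expect to be the main obstacle is justifying the two mean-field approximations that the $x_2\gg x_1$ hypothesis underwrites: first, that short cycles can be neglected so that the expected-neighbour recursion of part~1 actually counts distinct vertices rather than non-backtracking walks; and second, that the ``fill the graph'' heuristic --- equating the average distance with the radius at which the reachable count reaches $n_0$, while discarding all but the leading geometric term --- is legitimate. Both are exactly the regime in which the branching-process picture of \cite{newman2001random} is self-consistent, so the rigour rests on invoking that regime rather than on any single hard estimate.
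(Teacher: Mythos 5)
Your proposal is correct and follows essentially the same route as the paper's own derivation (taken from Newman et al.): the generating-function/branching-process setup with $G_0$ and $G_1$, the identification $G_1'(1)=x_2/x_1$ with the recursion $x_i=G_1'(1)\,x_{i-1}$ obtained by differentiating the composed generating functions, and then the APL estimate obtained by setting $x_\ell\approx n_0$ because the geometric growth makes the outermost shell dominate the count. The caveats you flag (tree-like neighbourhoods, the ``fill the graph'' heuristic) are exactly the mean-field assumptions the paper also leaves implicit under the hypothesis $x_2\gg x_1$.
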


In our two-layer model, the top-layer graph $\mathcal{G}_d$ (domain-wise topology with $m$ vertices) itself is a random graph following a given domain-wise degree distribution. Therefore, similar to \cite{newman2001random}, let $z'_i$ denote the average number of vertices that are $i$-hop away from a random vertex in $\mathcal{G}_d$. For two arbitrary nodes $v_1$ and $v_2$ with $v_1\in \mathcal{A}_1$, $v_2\in \mathcal{A}_q$, and $\mathcal{A}_1 \neq \mathcal{A}_q$, let $\Delta$ denote the average number of hops of the shortest domain-wise path from vertex $\vartheta(\mathcal{A}_1)$ to vertex $\vartheta(\mathcal{A}_q)$ in $\mathcal{G}_d$.
Then according to (\ref{eq:APLinH}), we have

\begin{equation}
\label{eq:Delta}
\Delta =\frac{\log(m/z_{1}^{\prime})}{\log(z_{2}^{\prime}/z_{1}^{\prime})} + 1,
\end{equation}
assuming $z'_2\gg z'_1$.

With (\ref{eq:Delta}), we know that the average value of $q$ in (\ref{eq:pathConcatenation}) is $\Delta+1$. If we further know the average length of $\mathcal{P}'_i$ associated with the traversed domain $\mathcal{A}_i$, then we can estimate the APL of $\mathcal{P}$. To this end, let $|\mathcal{P}|$ denote the \emph{number of hops} on path $\mathcal{P}$. Then according to (\ref{eq:pathConcatenation}), $|\mathcal{P}|=|\mathcal{P}'_1|+|\mathcal{P}'_2|+\ldots +|\mathcal{P}'_{\Delta+1}|$, where $q$ is replaced by its expection $\Delta+1$, and $|\mathcal{P}'_i|$ is a random variable. The expectation of $|\mathcal{P}|$ is
\vspace{-.5em}
\begin{equation}
\label{eq:pathHopExpLong}
\begin{split}
\mathbb{E}[|\mathcal{P}|]&=\mathbb{E}[|\mathcal{P}'_1|+|\mathcal{P}'_2|+\ldots +|\mathcal{P}'_{\Delta+1}|]\\
&= \mathbb{E}[|\mathcal{P}'_1|]+\mathbb{E}[|\mathcal{P}'_2|]+\ldots +\mathbb{E}[|\mathcal{P}'_{\Delta+1}|].
\end{split}
\vspace{-2em}
\end{equation}
According to the path construction procedure for MS, $\mathbb{E}[|\mathcal{P}'_1|]=\mathbb{E}[|\mathcal{P}'_2|]=\ldots =\mathbb{E}[|\mathcal{P}'_{\Delta}|]$ for two reasons. First, all domains have the same statistical properties. Second, in each domain $\mathcal{A}_i$ ($i\leq \Delta$), the routing mechanism selects a gateway (from a candidate set) that is closest to the ingress node. By contrast, in domain $\mathcal{A}_{\Delta+1}$, the routing mechanism only selects the shortest path from the ingress node to a single node $v_2$, i.e., the destination. Thus, (\ref{eq:pathHopExpLong}) can be simplified as
\begin{equation}
\label{eq:pathHopExpShort}
\mathbb{E}[|\mathcal{P}|]= \Delta\cdot\mathbb{E}[|\mathcal{P}'_1|]+\mathbb{E}[|\mathcal{P}'_{\Delta+1}|].
\end{equation}
In a domain $\mathcal{A}$ with $n$ intra-domain nodes, let $z_i$ denote the average number of intra-domain nodes that are $i$-hop ($i\geq 1$) away from an arbitrary node $v$ ($v\in \mathcal{A}$). Then again by (\ref{eq:APLinH}), we have
\begin{equation}
\label{eq:lastDomain}
\mathbb{E}[|\mathcal{P}'_{\Delta+1}|]=\frac{\log(n/z_{1})}{\log(z_{2}/z_{1})} + 1,
\end{equation}
assuming $z_2\gg z_1$. Hence, to compute $\mathbb{E}[|\mathcal{P}|]$ in (\ref{eq:pathHopExpShort}), it suffices to consider only $\mathbb{E}[|\mathcal{P}'_1|]$ associated with domain $\mathcal{A}_1$.

Recall that $\mathcal{P}'_1$ is obtained by intra-domain path $\mathcal{P}_1$ in $\mathcal{A}_1$ and also an inter-domain connection to $\mathcal{A}_2$; hence, $\mathbb{E}[|\mathcal{P}'_1|]=\mathbb{E}[|\mathcal{P}_1|]+1$. We can therefore focus on computing $\mathbb{E}[|\mathcal{P}_1|]$.
In $\mathcal{A}_1$, on average, there are $\gamma = n(1-(1-1/n)^{\beta})$ gateways connecting to $\mathcal{A}_2$. Suppose $\mathcal{A}_1$ contains exactly $\gamma$ gateways, denoted by set $S$. Then regarding path $\mathcal{P}_1$ from the starting point $v_1$ in $\mathcal{A}_1$ to set $S$, there are two cases. First, $v_1\in S$, then $\mathcal{P}_1$ is a degenerate path containing only one node $v_1$, i.e., $|\mathcal{P}_1|=0$. Second, $v_1\notin S$, which complicates the computation of $|\mathcal{P}_1|$. For the second case, 
let $l:=\mathbb{E}[|\mathcal{P}_1|\ |v_1\notin S]$, i.e., the expectation of $|\mathcal{P}_1|$ conditioned on $v_1\notin S$.
Regarding the gateway set $S$, there are up to $\gamma z_i$ non-gateways that are $i$-hop away from the closest gateways. Let $l_{\max}:=\arg\max_i z_i$ s.t. $\gamma + \sum_j z_j\leq n$. According to (\ref{eq:xi}), $z_i$ increases exponentially with $i$. In other words, the majority of non-gateways are $l_{\max}$-hop away from the closest gateways; therefore, we use $l_{\max}$ to approximate $l$. Thus,
$z_l\approx z_{l_{\max}}\approx n-\gamma \approx n+1-\gamma$ when $n$ is large. By solving $z_l = n+1-\gamma$, we obtain
\begin{equation}
\label{eq:l}
l = \frac{\log(\frac{n+1-\gamma}{z_1\gamma})}{\log(z_{2}/z_{1})} +1,
\end{equation}
where $\gamma = n(1-(1-1/n)^{\beta})$.
By close examination of (\ref{eq:l}), we notice that it is also needed to guarantee $l\geq 1$. Hence, (\ref{eq:l}) can be calibrated as follows.
\begin{equation}
\label{eq:finalL}
l=\left\{
                \begin{array}{ll}
                 \frac{\log(\frac{n+1-\gamma}{z_1\gamma})}{\log(z_{2}/z_{1})} +1 & \text{if}\ \gamma \leq \frac{n+1}{z_1+1},\\
                  1  & \text{otherwise}.\\
                \end{array}
              \right.
\end{equation}
It can be verified that when $\gamma=1$, (\ref{eq:finalL}) reduces to (\ref{eq:lastDomain}) as expected. In (\ref{eq:finalL}), it reveals a key threshold $\gamma_0=(n+1)/(z_1+1)$. When $\gamma\leq \gamma_0$, the distance from an arbitrary non-gateway to the closest gateway is relatively large; nevertheless, when $\gamma>\gamma_0$, there are sufficiently many gateways randomly distributed in one domain, causing each non-gateway having a gateway neighbor with high probability. Hence,
\begin{equation}
\label{eq:EP1prime}
\begin{split}
\mathbb{E}[|\mathcal{P}'_1|]&=\mathbb{E}[|\mathcal{P}_1|]+1\\
&=\mathbb{E}[|\mathcal{P}_1|\ |v_1\notin S]\prob(v_1\notin S) \\
&\ \ \ + \mathbb{E}[|\mathcal{P}_1|\ |v_1\in S]\prob(v_1\in S) +1\\
&=\frac{l(n-\gamma)}{n}+1.
\end{split}
\end{equation}
Putting (\ref{eq:Delta}), (\ref{eq:lastDomain}), and (\ref{eq:EP1prime}) into (\ref{eq:pathHopExpShort}), we get $\mathbb{E}[|\mathcal{P}|]$.

Thus far, we have not considered link weights in the network, because this information is not available to the routing mechanism under MS. However, the APL for MS, denoted by $L_{\MS}$, needs to account for the link weights. Recall in our two-layer model, all intra-domain link weights are modeled as a given i.i.d. random variable, denoted by $W$, and all inter-domain edges are of weight $1$. Hence,
\begin{equation}
\label{eq:L_MS}
\begin{split}
L_{\MS}&=\Delta\cdot(\mathbb{E}[|\mathcal{P}_1|]\cdot\mathbb{E}[W]+1)+\mathbb{E}[|\mathcal{P}'_{\Delta+1}|]\cdot\mathbb{E}[W]\\
&=\bigg(\frac{(n-\gamma)l\Delta}{n}+\frac{\log (n/z_1)}{\log (z_2/z_1)}+1\bigg)\mathbb{E}[W]+\Delta.
\end{split}
\end{equation} 
In the multi-domain SDN network, $L_{\MS}$ represents the APL under MS, which is the worst case, i.e., the longest paths. Nevertheless, (\ref{eq:L_MS}) shows that even such worst case can be quantified by a logarithmic function of network parameters. In the following sections, we investigate to what extent APL can be improved when more synchronized information is available.
    
    

\section{Average Path Length under Self-domain Synchronization}
\label{sec:APL_SS}

Similar to MS, no two domains  synchronize under self-domain synchronization (SS). Therefore, the routing mechanism under SS is almost the same as RMMS, except that each controller selects the shortest intra-domain path in terms of the accumulated link weight.

To analyze APL under SS, 
we need to combine the distributions of link weights and the number of hops into a new distance distribution for capturing APL. Here is the sketch of our analytical methodology.

\emph{Sketch of Analytical Methodology:} 
\vspace{-.3em}
\begin{enumerate}[a)]
\item We first compute the distribution of the distance between two random nodes within the same domain, called \emph{intra-domain distance distribution};
\item The expression in (\ref{eq:pathConcatenation}) remains valid. Therefore, we need to determine the APL of $\mathcal{P}'_i$ for $i=1,2,\ldots,q$. Since the domain-wise path is selected the same way as that under MS, again we have that the expected value of $q$ is $\Delta+1$;
\item As all controllers involved in the path construction process follow the same procedure, similar to (\ref{eq:pathHopExpShort}), it suffices to only quantify the APL of $\mathcal{P}'_1$ and $\mathcal{P}'_{\Delta+1}$ using the intra-domain distance distribution derived in a).
\end{enumerate}
\vspace{-.3em}
Based this methodology, we next discuss the details.
\vspace{-.3em}
\subsection{Intra-Domain Distance Distribution}
\vspace{-.3em}
In one domain, consider a path with $\lambda$ links. Let $W_1,W_2,\ldots,W_\lambda$ be i.i.d. random variables of link weights on this path with the probability density functions (pdf) being $f_{W_1}(x)=f_{W_2}(x)=\ldots =f_{W_{\lambda}}(x)$. Define random variable $\mathcal{W}_\lambda:=\sum^{\lambda}_{i=1}W_i$ as the accumulated weight on this path. Then the pdf of $\mathcal{W}_\lambda$ is the convolution of the pdfs of  $W_1,W_2,\ldots,W_\lambda$, i.e., $f_{\mathcal{W}_\lambda}(x)=f_{W_1}(x)*f_{W_2}(x)*\ldots *f_{W_{\lambda}}(x)$. 
By the principle in mixture distribution \cite{mixture}, we still need to know the probability $p_{\mathcal{W}_{\lambda}}$ of the shortest path (in terms of weights) between two random nodes containing $\lambda$ links. By the concept of $z_i$ we defined in the analysis of MS, we know that $p_{\mathcal{W}_{\lambda}}$ is proportional to $z_{\lambda}$; therefore, by normalization, we get $p_{\mathcal{W}_{\lambda}}=z_{\lambda}/n$. Note that when $\lambda=0$, $z_0=1$ and the cumulative distribution function (cdf) of $\mathcal{W}_0$ is a unit step function \cite{unitstepfunction}. Let random variable $\mathcal{D}$ be the shortest distance (in term of overall weights) between two random nodes in one domain, with the pdf being $f_{\mathcal{D}}(x)$, i.e., intra-domain distance distribution. Then by mixture distribution, $f_{\mathcal{D}}(x)$ can be estimated as follows.
\vspace{-.5em}
\begin{equation}
\label{eq:InterDomainDistance}
f_{\mathcal{D}}(x) = \sum_{i=0}^{h_{\max}}p_{\mathcal{W}_i}f_{\mathcal{W}_i}(x)=\sum_{i=0}^{h_{\max}}\frac{z_i}{n}\cdot f_{\mathcal{W}_i}(x),
\vspace{-.5em}
\end{equation}
where $h_{\max}:=\arg\max_i z_i$ s.t. $\sum_{i=0}^{h_{\max}}z_i\leq n$. Hence, the APL between two nodes in one domain is 
\begin{equation}
\label{eq:expD}
\mathbb{E}[\mathcal{D}]=\int_{x=0}^{+\infty}xf_{\mathcal{D}}(x).
\vspace{-1em}
\end{equation}


\vspace{-.3em}
\subsection{Domain-wise Path}
\vspace{-.3em}
Though SS and MS represent different synchronization levels, the corresponding domain-wise paths are exactly the same w.r.t. a pair of source and destination nodes in a given network. Thus, the end-to-end path construction can still be captured by (\ref{eq:pathConcatenation}). By (\ref{eq:Delta}), again, we have $q$ in (\ref{eq:pathConcatenation}) equals $\Delta +1$. Let $L(\mathcal{P})$ be the end-to-end accumulated weight (i.e., length) of path $\mathcal{P}$, which is a random variable. Then the expectation of $L(\mathcal{P})$, i.e., the APL for SS, denoted by $L_{\SS}$ is
\vspace{-.5em}
\begin{equation}
\label{eq:pathLengthExpSS}
\begin{split}
L_{\SS}&=\mathbb{E}[L(\mathcal{P})]\\
&=\mathbb{E}[L(\mathcal{P}'_1)+L(\mathcal{P}'_2)+\ldots +L(\mathcal{P}'_{\Delta+1})]\\
&= \Delta\cdot\mathbb{E}[L(\mathcal{P}'_1)]+\mathbb{E}[L(\mathcal{P}'_{\Delta+1})]\\
&=\Delta\cdot\mathbb{E}[L(\mathcal{P}'_1)]+\mathbb{E}[\mathcal{D}].
\end{split}
\vspace{-1em}
\end{equation}
The reason for the last row in (\ref{eq:pathLengthExpSS}) is that $\mathbb{E}[L(\mathcal{P}'_{\Delta+1})]$ essentially is the APL between two nodes in one domain. 
Thus, it suffices to determine $\mathbb{E}[L(\mathcal{P}'_1)]$ next, i.e., distance from a random starting point to the closest gateway within the domain.\looseness=-1
\vspace{-.4em}
\subsection{Distance from a Node to the Closest Gateway}
\vspace{-.3em}
Let random variable $M^{(\beta)}$ denote the shortest distance from an arbitrary node $w$ to the closest gateway in the candidate gateway set $S$ within a domain with the inter-domain connection parameter $\beta$, where $S$ contains all gateways connecting to the same neighboring domain. 
Recall that in our two-layer model, gateways are randomly selected. Therefore, let $\mathcal{D}_1,\mathcal{D}_2,\ldots, \mathcal{D}_\beta$ be i.i.d. random variables, denoting the shortest distance between two random nodes in a domain with the same pdf as $\mathcal{D}$ in (\ref{eq:InterDomainDistance}). We have 
\vspace{-.5em}
\begin{equation}
\label{eq:minRV}
M^{(\beta)}=\min(\mathcal{D}_1,\mathcal{D}_2,\ldots, \mathcal{D}_\beta),
\vspace{-.5em}
\end{equation} 
and $L(\mathcal{P}'_1)=M^{(\beta)}+1$ as $\mathcal{P}'_1$ terminates at domain $\mathcal{A}_2$. As a special case, when $\beta=1$, i.e., $\exists$ only one gateway in $S$, then $M^{(\beta)}=\mathcal{D}$.
When $\beta>1$, the probability $\prob(M^{(\beta)}\leq d) = \prob(\min(\mathcal{D}_1,\mathcal{D}_2,\ldots, \mathcal{D}_\beta) \leq d)$, i.e., at least one of $\{\mathcal{D}_i\}_{i=1}^\beta$ is smaller than or equal to $d$. Therefore, let $F_{\mathcal{D}}(x)$ be the cdf of $\mathcal{D}$, and $F_{M^{(\beta)}}(x)$  the cdf of $M^{(\beta)}$. Then 
\vspace{-.5em}
\begin{equation}
\label{eq:cdfD1beta}
F_{M^{(\beta)}}(x) = 1-(1-F_{\mathcal{D}}(x))^{\beta}.
\vspace{-.5em}
\end{equation}
Therefore, the pdf of $M^{(\beta)}$ is 
\vspace{-.5em}
\begin{equation}
\label{eq:pmfD1beta}
  f_{M^{(\beta)}}(x)= 
  \begin{cases}
    \!\begin{aligned}
       & (1-F_{\mathcal{D}}(x-1))^{\beta}  \\
       &  - (1-F_{\mathcal{D}}(x))^{\beta}    
    \end{aligned}   & \text{for}\ x \geq 1, \\
   \\ 1-(1-F_{\mathcal{D}}(0))^{\beta}& \text{for}\ x =0.
  \end{cases}
  \vspace{-.5em}
\end{equation}
With (\ref{eq:pmfD1beta}), we derive
\vspace{-.8em}
\begin{equation}
\label{eq:firstDomainSS}
\mathbb{E}[L(\mathcal{P}'_1)]=\mathbb{E}[M^{(\beta)}]+1=\int_{x=0}^{+\infty}xf_{M^{(\beta)}}(x)+1.
\vspace{-.5em}
\end{equation}
Substituting (\ref{eq:Delta}), (\ref{eq:firstDomainSS}), and (\ref{eq:expD}) into (\ref{eq:pathLengthExpSS}), we get the expression of $L_{\SS}$. Comparing to $L_{\MS}$, the expression of $L_{\SS}$ is more complicated as the link weight can be of any distribution. Nevertheless, it is verifiable that $L_{\SS}$ is smaller than $L_{\MS}$ yet still bounded by $L_{\MS}$ (a logarithmic function). 

\section{Average Path Length under Partial Synchronization}
\label{sec:APL_PS}

For partial synchronization (PS), the synchronization cost among controllers is higher than that under SS. In this section, we quantify how such increased synchronization cost improves the APL in the network.
\vspace{-.3em}
\subsection{Partial Synchronization Model}
\vspace{-.3em}
We consider a simple PS model: In the two-layer network, given integer $\tau$ ($\tau\geq 1$), any two domains with their shortest domain-wise distance less than or equal to $\tau-1$ are synchronized. In other words, one domain is synchronized with every other domain that is within $(\tau-1)$-hop away. Moreover, we assume that the synchronization cost between two domains is proportional to their shortest domain-wise path length. Therefore, PS with a higher $\tau$ is more difficult to achieve. In particular, when $\tau=1$, PS is reduced to SS; when $\tau$ is greater than the longest domain-wise distance between any two domains, PS can augment to CS. Note that this model does not specify how the synchronization information  is used, it only provides a synchronization radius ($\tau$). In the rest of this section, we assume that $\tau$ is not significantly large, i.e., PS$\neq$CS, no single controller has the full network knowledge. \looseness=-1
\vspace{-.3em}
\subsection{Routing Mechanism under PS (RMPS)} 
\label{sec:RMPS}
\vspace{-.3em}
Under the above partial synchronization  model, we now describe a basic routing mechanism under PS, RMPS.

\begin{enumerate}[(i)]
\item  Since PS$\neq$CS, similar to RMMS, for the given two nodes $v_1$ and $v_2$, RMPS first selects the shortest domain-wise path $\mathcal{P}(v_1,v_2)$, which goes through vertices $\vartheta(\mathcal{A}_1),\vartheta(\mathcal{A}_2),\ldots,\vartheta(\mathcal{A}_q)$ in $\mathcal{G}_d$ ($v_1\in\mathcal{A}_1,v_2\in\mathcal{A}_q$), with ties (if any) broken randomly. 
\item Following the direction from $\vartheta(\mathcal{A}_1)$ to $\vartheta(\mathcal{A}_q)$, RMPS partitions $\mathcal{P}(v_1,v_2)$ into $\lceil q/\tau\rceil$ non-overlapping sub-paths $\{\Lambda_i\}$, where vertices $\vartheta(\mathcal{A}_{(i-1)\tau+1})$, $\vartheta(\mathcal{A}_{(i-1)\tau+2})$, $\ldots$, $\vartheta(\mathcal{A}_{\min(q,(i-1)\tau+\tau)})$ are in $\Lambda_i$; $\Lambda_i$ and $\Lambda_{i+1}$ are connected in the domain-wise topology. Domains $\mathcal{A}_{(i-1)\tau+1}$, $\mathcal{A}_{(i-1)\tau+2}$, $\ldots$, $\mathcal{A}_{\min(q,(i-1)\tau+\tau)}$ are called \emph{routing cluster} $i$, as illustrated in Fig.~\ref{fig:Cluster_example}.
\item Viewing each routing cluster as a whole ``domain'', path segments are constructed in each routing cluster similar to step (ii) of RMMS, except that paths are computed based on link weights. Let $\mathcal{S}'_i$ be the constructed path segment by controllers in routing cluster $i$. Note that similar to RMMS, $\mathcal{S}'_i$ is the concatenation of path $\mathcal{S}_i$ within routing cluster $i$ and a link connecting to routing cluster $i+1$ when $i<\lceil q/\tau\rceil$;
\item The final end-to-end path $\mathcal{P}$ connecting $v_1$ and $v_2$ is
\vspace{-.5em}
\begin{equation}
\label{eq:pathPS}
\mathcal{P}=\mathcal{S}'_1+\mathcal{S}'_2+\ldots +\mathcal{S}'_{\lceil q/\tau \rceil}; 
\vspace{-.5em}
\end{equation}
\vspace{-.5em}
\end{enumerate}

\begin{figure}[tb]
\centering
\includegraphics[width=3.5in]{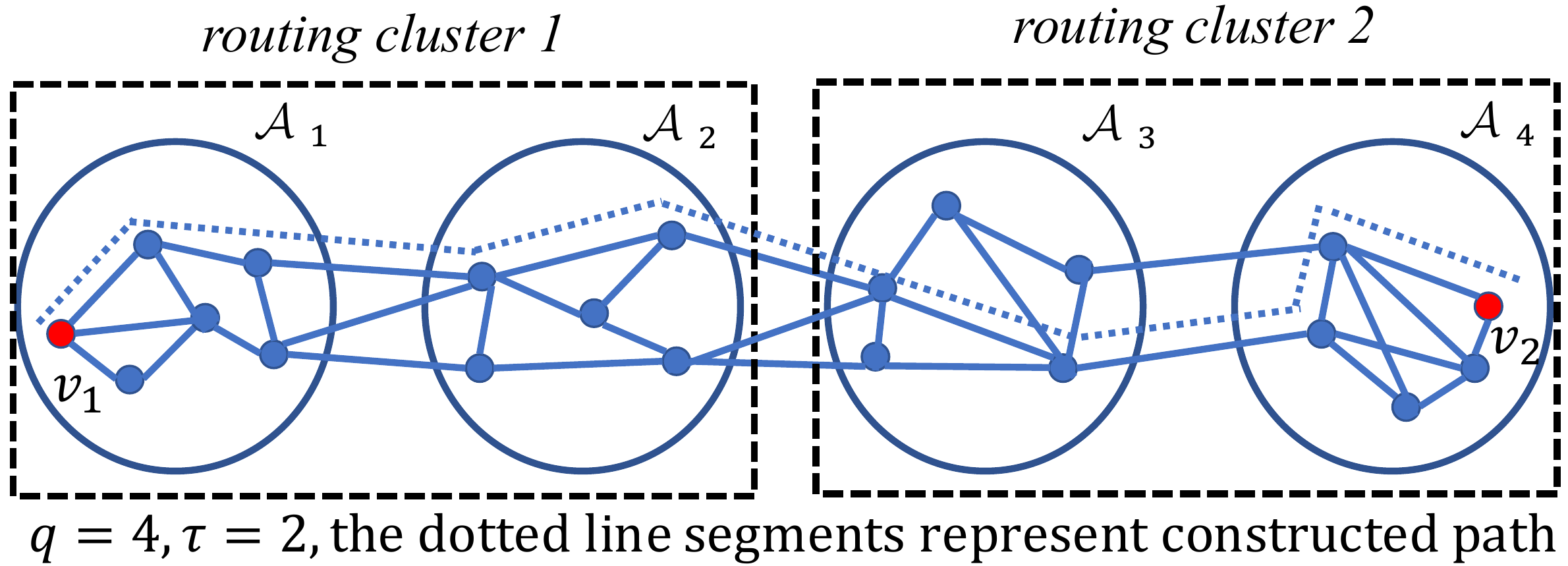}
\caption{Path construction under RMPS w.r.t. $v_1$ and $v_2$, whose shortest domain-wise path traverses $\mathcal{A}_1$, $\mathcal{A}_2$, $\mathcal{A}_3$, and $\mathcal{A}_4$. $\mathcal{A}_1, \mathcal{A}_2$ form routing cluster 1 and $\mathcal{A}_3, \mathcal{A}_4$ form routing cluster 2. Domains $\mathcal{A}_2$ and $\mathcal{A}_3$ are synchronized, the corresponding synchronized information is not used under RMPS.\looseness=-1}
\label{fig:Cluster_example}
\vspace{-.5em}
\end{figure}
\vspace{-1em}
\emph{Remark:} RMPS provides a basic path construction method for the PS scenario. Note that though domains in different routing clusters may synchronize with each other (e.g., domains $\mathcal{A}_2$ and $\mathcal{A}_3$ in Fig.~\ref{fig:Cluster_example}), the synchronized information may not used under RMPS. We acknowledge that there may exist other routing algorithms with better performance. However, our intention here is \emph{not} to design an optimal routing mechanism for PS. Instead, our goal is to quantify the performance of a given routing mechanism for PS, for which we select RMPS as a representative routing mechanism to analyze. 
\subsection{APL under Partial Synchronization (PS)}
\label{subsec:APL_PS}
To compute APL incurred by RMPS under PS, denoted by $L_{\PS}$, we first consider APL w.r.t. two nodes with their shortest domain-wise path containing exactly $q$ vertices, denoted by $L_q^{\PS}$. Similar to (\ref{eq:pathHopExpShort}), we know from (\ref{eq:pathPS}) that $\mathbb{E}[\mathcal{S}_1]=\mathbb{E}[\mathcal{S}_2]=\ldots=\mathbb{E}[\mathcal{S}_{\lceil q/\tau \rceil-1}]\neq\mathbb{E}[\mathcal{S}_{\lceil q/\tau \rceil}]$. Therefore, it suffices to determine $\mathbb{E}[\mathcal{S}_1]$ and $\mathbb{E}[\mathcal{S}_{\lceil q/\tau \rceil}]$. For ease of presentation, we introduce the  concept of bus networks.
\begin{definition}
\label{def:busnet}
A \emph{bus network} of length $k$, a special graph following the two-layer network model, consists of $k$ domains, where its domain-wise topology is a connected line graph (i.e., a special tree with each vertex having at most one child). 
\end{definition} 
In a bus network of length $k$ with the involved domains labelled as $\mathcal{A}_1$, $\mathcal{A}_2$, $\ldots$, $\mathcal{A}_k$, let random variable $D_k^{(\beta)}$ denote the length of the shortest path between two arbitrary nodes $v_1\in \mathcal{A}_1$ and $v_2\in\mathcal{A}_k$, $M_k^{(\beta)}$ the length of the shortest path between an arbitrary node $v_1\in \mathcal{A}_1$ and the closest gateway in $\mathcal{A}_k$ connecting to domains outside this bus network. Hence, to determine $\mathbb{E}[\mathcal{S}_1]$ and $\mathbb{E}[\mathcal{S}_{\lceil q/\tau \rceil}]$, we only need to compute $\mathbb{E}[M_k^{(\beta)}]$ and $\mathbb{E}[D_k^{(\beta)}]$. Thus, we define random variable $X^{(k)}:=D_{k-1}^{(\beta)}+\mathcal{D}+1$ (the pdf of $\mathcal{D}$ is in (\ref{eq:InterDomainDistance})). Let $X_1^{(k)},X_2^{(k)},\ldots,X_\beta^{(k)}$ be i.i.d. random variables following the same distribution as $X^{(k)}$. Then similar to (\ref{eq:minRV}), we have
\begin{equation}
D_{k}^{(\beta)}=\min(X_1^{(k)},X_2^{(k)},\ldots,X_\beta^{(k)}),
\vspace{-.5em}
\end{equation}
where $D_1^{(\beta)}=\mathcal{D}$. Analogously, let 
$Y^{(k)}:=D_{k-1}^{(\beta)}+M^{(\beta)}+1$, and  $Y_1^{(k)},Y_2^{(k)},\ldots,Y_\beta^{(k)}$ be i.i.d. random variables following the same distribution as $Y^{(k)}$. Then
$M_{k}^{(\beta)}=\min(Y_1^{(k)},Y_2^{(k)},\ldots,Y_\beta^{(k)})$, where $M_1^{(\beta)}=M^{(\beta)}$ is defined in (\ref{eq:minRV}).
Then following the same method in (\ref{eq:cdfD1beta}--\ref{eq:firstDomainSS}), $\mathbb{E}[D_k^{(\beta)}]$ and $\mathbb{E}[M_k^{(\beta)}]$ are computable. Note that it is expensive to compute $D_k^{(\beta)}$ and $M_k^{(\beta)}$, as they are defined in a recursive way; more efficient computation methods are discussed in Section~\ref{sec:APL_CS}.

Next, depending on the value of $q$, we have
\begin{equation}
L_q^{\PS}= 
  \begin{cases}   
   (\lfloor{q/\tau}\rfloor-1)\cdot(\mathbb{E}[M_\tau^{(\beta)}]+1)+\mathbb{E}[D_\tau^{(\beta)}]  &    \text{if }\theta=0; \\
   \\  \lfloor{q/\tau}\rfloor\cdot(\mathbb{E}[M_\tau^{(\beta)}]+1)+\mathbb{E}[D_\theta^{(\beta)}]  &    \text{if }\theta>0,
  \end{cases}
\end{equation} 
where $\theta=q\mod \tau$. Recall that two arbitrary nodes with their domain-wise path length containing $q$ vertices happen with probability being approximately $z'_{q-1}/(m-1)\approx z'_{q-1}/m$. Therefore, $L_{\PS}=\sum_{q=2}^{h'_{\max}+1}L_q^{\PS}z'_{q-1}/m$, where $h'_{\max}:=\arg\max_i z'_i$ s.t. $1+\sum_{i=1}^{h_{\max}}z'_i\leq m$.

\section{Average Path Length under Complete Synchronization}
\label{sec:APL_CS}

For complete synchronization (CS), all SDN domains are synchronized, which is equivalent to the case where there exists a logical centralized controller in the network. Therefore, the routing mechanism for CS is simple, and all controllers make the unanimous global optimal decisions. Since CS represents the best synchronization level, the following analytical results also serve as a performance bound that no routing mechanisms for any synchronization levels can exceed.

Given two arbitrary nodes $v_1$ and $v_2$, suppose the shortest domain-wise path $\mathcal{P}^*$ w.r.t. $v_1$ and $v_2$ contains $k$ vertices in the domain-wise topology.
If $\mathcal{P}^*$ corresponds to the shortest path between $v_1$ and $v_2$, then the APL under CS can be easily obtained as we have derived a method to compute $\mathbb{E}[D_k^{(\beta)}]$, the APL w.r.t. two random nodes at the end-domains in a bus network of length $k$, in Section~\ref{sec:APL_PS}. However, for the global shortest path  $\mathcal{P}^*$, it is possible that $\mathcal{P}^*$ visits more than $k$ domains while experiencing a shorter end-to-end path length. We, therefore, derive the properties of $\mathbb{E}[D_k^{(\beta)}]$ and examine how it is related to $\mathcal{P}^*$. Let $L_k(\beta):=\mathbb{E}[D_k^{(\beta)}]$. Then 
\begin{theorem}
\label{thm:shortIsGood}
For the two-layer network model, $L_{k}{(\beta)}<L_{k+1}{(\beta)}$ when $k \geq 3$. 
\end{theorem}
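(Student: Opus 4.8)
The plan is to strengthen the claim into a statement about the whole law of $D_k^{(\beta)}$: I will show that $D_k^{(\beta)}$ is stochastically increasing in $k$ for $k\geq 3$, i.e. $\prob(D_{k+1}^{(\beta)}>x)\geq \prob(D_k^{(\beta)}>x)$ for every $x$, and then recover $L_k(\beta)<L_{k+1}(\beta)$ by integrating survival functions (recall $\mathbb{E}[X]=\int_0^\infty \prob(X>x)\,dx$ for $X\geq 0$). Everything is driven by the recursion stated before the theorem, which I read as an operator $T$ on distributions: for a nonnegative $Z$ set $T(Z):=\min_{i=1}^\beta(Z_i+\mathcal{D}_i+1)$ with $Z_i,\mathcal{D}_i$ independent copies, so that $D_{k+1}^{(\beta)}=T(D_k^{(\beta)})$. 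The first step is to note that $T$ is monotone for the stochastic order. Indeed $\prob(T(Z)>x)=\big(\prob(Z+\mathcal{D}+1>x)\big)^{\beta}$, and $\prob(Z+\mathcal{D}+1>x)=\int_0^\infty \prob(Z>x-1-t)\,dF_{\mathcal{D}}(t)$ is, at each $x$, nondecreasing in the survival function of $Z$; since $t\mapsto t^\beta$ is increasing on $[0,1]$, this monotonicity is preserved. Thus $A\preceq_{\mathrm{st}}B$ implies $T(A)\preceq_{\mathrm{st}}T(B)$.

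Given monotonicity, the proof collapses to a single seed inequality: if $D_{k_0}^{(\beta)}\preceq_{\mathrm{st}}D_{k_0+1}^{(\beta)}$ for some $k_0$, applying $T$ repeatedly yields $D_k^{(\beta)}\preceq_{\mathrm{st}}D_{k+1}^{(\beta)}$ for all $k\geq k_0$, hence $L_k(\beta)\leq L_{k+1}(\beta)$. Strictness is then automatic: the smallest value in the support increases by $1$ plus the smallest value of $\mathcal{D}$ at each step, so consecutive laws genuinely differ and the stochastically larger one has strictly larger mean. It therefore suffices to take $k_0=3$ and establish the base case $D_3^{(\beta)}\preceq_{\mathrm{st}}D_4^{(\beta)}$ by hand. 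Using $D_4=T(D_3)$, $D_3=T(D_2)$ and the $\beta$-th-root step above, this base case is equivalent to $D_3^{(\beta)}+\mathcal{D}+1\succeq_{\mathrm{st}}D_2^{(\beta)}+\mathcal{D}+1$ with a fresh independent $\mathcal{D}$.

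I expect this base case to be the main obstacle, and it is precisely where the hypothesis $k\geq 3$ enters. The difficulty is that the convolved comparison cannot be reduced to $D_3^{(\beta)}\succeq_{\mathrm{st}}D_2^{(\beta)}$, which I do not expect to hold as a clean ordering: $D_1^{(\beta)}=\mathcal{D}$ is a single intra-domain distance with no benefit from the $\beta$ parallel inter-domain links, whereas every $D_k^{(\beta)}$ with $k\geq 2$ is a $\beta$-fold minimum, concentrated and bounded below by $1$. This mismatch already makes the low-index terms non-monotone — for large $\beta$ one can have $L_1(\beta)>L_2(\beta)$ — so the ordering cannot be seeded at $k_0=1$ or $k_0=2$; only after two minimizations have been applied (from $D_3$ onward) does the boundary irregularity of the un-minimized first domain wash out, giving the $k\geq 3$ threshold. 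Concretely I would verify the seed using the explicit minimum/mixture distributions for $\mathcal{D}$, $D_2^{(\beta)}$ and $D_3^{(\beta)}$ developed in the PS analysis, splitting the comparison into the lower part of the support (where $D_3\geq 2\geq D_2$ gives dominance outright) and the remaining range (where one bounds the convolution integral $\int_0^\infty(\prob(D_3^{(\beta)}>x-1-t)-\prob(D_2^{(\beta)}>x-1-t))\,dF_{\mathcal{D}}(t)$ from below). Should full stochastic dominance at the base prove too strong, the fallback is to argue the expectation inequality directly from $L_{k+1}(\beta)-L_k(\beta)=\int_0^\infty([\prob(D_k+\mathcal{D}+1>x)]^\beta-[\prob(D_{k-1}+\mathcal{D}+1>x)]^\beta)\,dx$, which needs only that the net area be positive rather than pointwise positivity of the integrand.
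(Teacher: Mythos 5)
Your reduction machinery is sound --- the operator $T(Z)=\min_{i\le\beta}(Z_i+\mathcal{D}_i+1)$ does preserve the usual stochastic order, and the identity $\prob(T(Z)>x)=\big(\prob(Z+\mathcal{D}+1>x)\big)^{\beta}$ is exactly right --- but the proposal founders on the seed, and not merely because you leave it unproven. The seed is provably \emph{false} in part of the regime the theorem covers. By your own $\beta$-th-root step, $D_3^{(\beta)}\preceq_{\mathrm{st}}D_4^{(\beta)}$ is equivalent to $D_2^{(\beta)}+\mathcal{D}\preceq_{\mathrm{st}}D_3^{(\beta)}+\mathcal{D}$, and stochastic dominance of nonnegative variables implies ordering of means, so the seed forces $L_2(\beta)\le L_3(\beta)$. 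But the paper observes (numerically, and this is precisely why the theorem is restricted to $k\ge 3$) that $L_2(\beta)>L_3(\beta)$ for certain $\beta$ and weight distributions; $L_k(\beta):=\mathbb{E}[D_k^{(\beta)}]$ there is defined by the same recursion you use. In any such regime your base case fails and the induction cannot start, even though the conclusion $L_3(\beta)<L_4(\beta)$ is still asserted to hold there. Your intuition that the boundary irregularity ``washes out after two minimizations'' is therefore misleading: the seed at $k_0=3$ is a convolved comparison of $D_2^{(\beta)}$ and $D_3^{(\beta)}$ and inherits exactly the index-$2$-versus-$3$ pathology. More generally, the same equivalence shows that seeding at any $k_0$ implies the mean inequality at $k_0-1$, i.e., the theorem one index down; so the stochastic-dominance strengthening is either false (at $k_0=3$) or presupposes what is to be proved (at $k_0\ge 4$).

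The fallback you offer does not repair this. If you only establish the expectation inequality at the base, the induction step collapses, because $T$ does not preserve ordering of means: the minimum of i.i.d.\ copies is sensitive to the whole law, not the mean (e.g., $A\equiv 1$ versus $B\in\{0,\,2.2\}$ equiprobable, with $\beta=2$ and $\mathcal{D}\equiv 0$, gives $\mathbb{E}[A]<\mathbb{E}[B]$ yet $\mathbb{E}[T(A)]=2>1.55=\mathbb{E}[T(B)]$). So with only a mean-level base you would have to re-prove the mean inequality separately at every $k$, which is the original theorem with no tools gained. For comparison, the paper's own proof is not an induction on the recursion at all: it fixes common end-domains in the length-$k$ and length-$(k+1)$ bus networks and argues qualitatively that, since middle domains are statistically identical, an extra middle domain adds an unavoidable extra crossing without raising the probability of a shorter gateway-to-gateway route. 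That argument is informal --- far less rigorous than your framework aspires to be --- but it does not overclaim a distributional dominance that demonstrably fails. A correct completion along your lines would need a weaker invariant (at the level of means, or some order that is both preserved by $T$ and true at $k=3$); as written, the proposal does not supply one.
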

\begin{proof}
Consider two bus networks of length $k$ and $k+1$, whose domains are labelled as $\mathcal{A}_1,\mathcal{A}_2,\dots,\mathcal{A}_k$ and $\mathcal{B}_1,\mathcal{B}_2,\dots,\mathcal{B}_{k+1}$, respectively. Let $V_{\text{in}}(\mathcal{C}_i)$ ($V_{\text{out}}(\mathcal{C}_i)$) be the set of gateways in domain $\mathcal{C}_i$ connecting to domain $\mathcal{C}_{i-1}$ ($\mathcal{C}_{i+1}$).
Without loss of generality, we assume that $\mathcal{A}_1 = \mathcal{B}_1$ and $\mathcal{A}_k = \mathcal{A}_{k+1}$. This  implies that $V_{\text{out}}(\mathcal{A}_1)=V_{\text{out}}(\mathcal{B}_1)$ and $V_{\text{in}}(\mathcal{A}_k)=V_{\text{in}}(\mathcal{B}_{k+1})$.
Therefore, $L_{k}{(\beta)}$ and $L_{k+1}{(\beta)}$ are determined by the pair-wise distance between $V_{\text{out}}(\mathcal{A}_1)$ and $V_{\text{in}}(\mathcal{A}_k)$, and $V_{\text{out}}(\mathcal{B}_1)$ and $V_{\text{in}}(\mathcal{B}_{k+1})$, respectively. When $k \geq 3$, there exist at least one domain, called \emph{middle domain}, apart from the end-domains in a bus network. Since all middle domains have the same statistical parameters, each middle domain offers the same probability of finding a path with certain APL in that domain. Furthermore, more middle domains introduce more inter-domain edges. Thus, there is no higher possibility of finding a shorter route due to the presence of more middle domains. Therefore, more middle domains result in higher expectation of pair-wise distance between $V_{\text{out}}(\mathcal{B}_1)$ and $V_{\text{in}}(\mathcal{B}_{k+1})$. 
\end{proof}

Theorem~\ref{thm:shortIsGood} reveals a property of $L_{k}{(\beta)}$ that is vital to our analysis, i.e., a longer domain-wise path incurs higher end-to-end path weight if the shortest domain-wise path between two nodes contains at least three vertices. Note that in Theorem~\ref{thm:shortIsGood}, there are two uncovered cases. First, $k=1$. Since we are not interested in determining APL for two random nodes within the same domain, this case is ignored. Second, $k=2$. From numerical results, we observe that $L_{2}{(\beta)}$ may be slightly greater than $L_{3}{(\beta)}$ when $\beta$ satisfies certain conditions. Nevertheless, the case that two random nodes residing in two neighboring domains only happen with probability $z'_1/m$, which can be ignored as $z'_2\gg z'_1$ and $m$ is large. Next, based on Theorem~\ref{thm:shortIsGood}, the following two corollaries on $L_{k}{(\beta)}$ can be deduced, which always hold irrespective of the values of $k$.

\begin{corollary}
\label{cor:beta=1}
For the two-layer network model, $ L_{k+1}{(1)} -L_k{(1)} = \mathbb{E}[\mathcal{D}]+1$.
\end{corollary}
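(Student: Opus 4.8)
The plan is to specialize the recursive construction of $D_k^{(\beta)}$ from Section~\ref{sec:APL_PS} to the case $\beta=1$, where the minimum operation degenerates and the recursion becomes purely additive. Recall that $D_k^{(\beta)}=\min(X_1^{(k)},\ldots,X_\beta^{(k)})$ with $X^{(k)}:=D_{k-1}^{(\beta)}+\mathcal{D}+1$ and base case $D_1^{(\beta)}=\mathcal{D}$. Here the appended $\mathcal{D}$ is an independent copy of the intra-domain distance (reflecting that the entry gateway of the new end-domain is a random node, by the random gateway placement in the model), and the constant $1$ is the weight of the single inter-domain link ($C=1$).

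First I would set $\beta=1$ and observe that a minimum taken over a single i.i.d. copy is that copy itself, so the recursion collapses to $D_k^{(1)}=D_{k-1}^{(1)}+\mathcal{D}+1$. The point worth noting is why this equality is exact rather than merely an upper bound: with $\beta=1$ every inter-domain connection consists of a unique gateway pair, so in a bus network the routing path is forced through the unique gateway chain and no alternative routing choice exists to shorten it. That is precisely the structural reason the $\min$ has nothing left to optimize over.

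Next I would take expectations of both sides and apply linearity of expectation---which needs no independence assumption---to obtain $L_k(1)=L_{k-1}(1)+\mathbb{E}[\mathcal{D}]+1$ for every $k\geq 2$, with base value $L_1(1)=\mathbb{E}[\mathcal{D}]$. Re-indexing $k\mapsto k+1$ then gives $L_{k+1}(1)-L_k(1)=\mathbb{E}[\mathcal{D}]+1$, which is the claim, and this holds for all $k\geq 1$ irrespective of the $k\geq 3$ restriction in Theorem~\ref{thm:shortIsGood}, since the additive recursion involves no competing multi-domain shortcut.

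The main obstacle is conceptual rather than computational: there is no hard estimate to grind through, so the entire burden lies in justifying the degeneracy of the $\min$ and the distributional interpretation of the appended $\mathcal{D}+1$ term. In particular I would state clearly that for $\beta=1$ the bus topology admits a single gateway chain, which both eliminates the optimization in the $\min$ and makes the added increment $\mathcal{D}+1$ (one inter-domain hop plus one fresh intra-domain traversal to the random destination) statistically independent of the length accumulated over the first $k-1$ domains.
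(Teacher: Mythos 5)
Your proof is correct and is essentially the same argument as the paper's: the paper simply unrolls your recursion into the closed form $L_k(1)=k\mathbb{E}[\mathcal{D}]+k-1$ and subtracts, while you phrase it as a telescoping recursion $L_k(1)=L_{k-1}(1)+\mathbb{E}[\mathcal{D}]+1$; both rest on the same observation that $\beta=1$ forces a unique gateway chain so the $\min$ degenerates. Your version is slightly more explicit about why the equality is exact, but there is no substantive difference in approach.
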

\begin{proof}
When $\beta = 1$, 
$L_k(1)=\mathbb{E}[D_{k}^{(1)}] = k\mathbb{E}[\mathcal{D}] + k - 1 $, and $L_{k+1}(1) = \mathbb{E}[D_{k+1}^{(1)}] = (k+1)\mathbb{E}[\mathcal{D}] + k$. Therefore, $ L_{k+1}(1) -L_k(1) = \mathbb{E}[D_{k+1}^{(1)}]- \mathbb{E}[D_{k}^{(1)}] = \mathbb{E}[\mathcal{D}]+1$. 
\end{proof}
\begin{corollary}
\label{cor:beta=infty}
For the two-layer network model, $\lim_{\beta \to \infty} \big(L_{k+1}{(\beta)} -L_k{(\beta)}\big) = 1$.
\end{corollary}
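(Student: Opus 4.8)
The plan is to prove the stronger statement $\lim_{\beta\to\infty} L_k(\beta)=k-1$ for every fixed $k\ge 2$; the corollary then follows at once, since $L_{k+1}(\beta)-L_k(\beta)\to k-(k-1)=1$. The intuition guiding everything is that as $\beta\to\infty$ the expected number of links between two adjacent domains, $n^2\big(1-(1-1/n^2)^{\beta}\big)$, tends to $n^2$, so adjacent domains become completely bipartitely connected; in that regime the shortest path from $v_1\in\mathcal{A}_1$ to $v_2\in\mathcal{A}_k$ can avoid all intra-domain weight and traverse only the $k-1$ unit-weight inter-domain edges. As a first, purely deterministic observation I would record the matching lower bound: any path realizing $D_k^{(\beta)}$ in a bus network must cross at least $k-1$ inter-domain edges, each of weight $1$, while all intra-domain weights are non-negative, so $D_k^{(\beta)}\ge k-1$ pointwise and hence $L_k(\beta)\ge k-1$ for all $\beta$.

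Next I would establish $D_k^{(\beta)}\to k-1$ in probability by induction on $k$, using the recursion $D_k^{(\beta)}=\min(X_1^{(k)},\dots,X_\beta^{(k)})$ with $X^{(k)}=D_{k-1}^{(\beta)}+\mathcal{D}+1$. Since the $X_i^{(k)}$ are i.i.d., $\prob(D_k^{(\beta)}>k-1+\epsilon)=\big[\prob(D_{k-1}^{(\beta)}+\mathcal{D}>k-2+\epsilon)\big]^{\beta}$. By independence of $D_{k-1}^{(\beta)}$ and $\mathcal{D}$, the crucial fact $\prob(\mathcal{D}=0)=z_0/n=1/n>0$ (from $z_0=1$ in (\ref{eq:InterDomainDistance})), and the inductive hypothesis $\prob(D_{k-1}^{(\beta)}\le k-2+\epsilon/2)\to 1$, the one-trial success probability $q_\beta:=\prob(D_{k-1}^{(\beta)}+\mathcal{D}\le k-2+\epsilon)$ is bounded below by $1/(2n)$ for large $\beta$. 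Hence $\prob(D_k^{(\beta)}>k-1+\epsilon)\le(1-1/(2n))^{\beta}\to 0$. The base case $k=2$ is direct: $D_2^{(\beta)}=\min_i(\mathcal{D}_{1,i}+\mathcal{D}_{2,i}+1)$, and $\prob(\mathcal{D}_1+\mathcal{D}_2=0)=(1/n)^2>0$ gives $\prob(D_2^{(\beta)}>1+\epsilon)\le(1-1/n^2)^{\beta}\to 0$.

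Finally I would upgrade convergence in probability to convergence of expectations. The clean device is to couple all values of $\beta$ on a single probability space by drawing one increasing sequence of candidate inter-domain edges: adding edges can only shorten shortest paths, so $\beta\mapsto D_k^{(\beta)}$ is non-increasing and bounded below by $k-1$, hence converges almost surely to a limit that the previous step identifies as $k-1$. The monotone convergence theorem then yields $L_k(\beta)=\mathbb{E}[D_k^{(\beta)}]\downarrow k-1$, and the corollary is immediate.

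I expect the main obstacle to be this last step, passing from convergence in probability to convergence of the expectation: because $D_k^{(\beta)}$ is defined recursively and its entire distribution shifts with $\beta$, no fixed integrable dominating variable is handed to us a priori, so the graph-theoretic monotonicity argument (adding edges never lengthens shortest paths) is exactly what makes monotone convergence applicable and is the part I would write most carefully. A secondary subtlety is the excluded value $k=1$, where $D_1^{(\beta)}=\mathcal{D}$ does not concentrate at $0$ and $L_1(\beta)=\mathbb{E}[\mathcal{D}]$; this is precisely why the induction must be anchored at $k=2$, and why the limit statement is taken in the regime $k\ge 2$.
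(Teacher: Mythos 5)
Your proposal is correct, and it proves the same strengthened claim as the paper — namely $L_k(\beta)\to k-1$ for fixed $k\ge 2$, from which the corollary follows by differencing — but it reaches that claim by a genuinely different route. The paper's own proof is a two-sentence structural argument: as $\beta\to\infty$ the expected number of links between adjacent domains tends to $n^2$, so every node of $\mathcal{A}_i$ becomes adjacent to every node of $\mathcal{A}_{i+1}$; on that limiting graph the intra-domain segments have length $0$, so the end-to-end path consists of exactly the $k-1$ (resp.\ $k$) unit-weight inter-domain edges. This is short and conveys the right intuition, but it silently identifies ``the limit of the expected path length'' with ``the path length on the limiting graph,'' i.e., it never justifies exchanging the $\beta\to\infty$ limit with the expectation. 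Your proof supplies exactly the pieces the paper omits: the deterministic lower bound $D_k^{(\beta)}\ge k-1$; an induction on $k$ through the recursive definition $D_k^{(\beta)}=\min(X_1^{(k)},\ldots,X_\beta^{(k)})$, using the atom $\prob(\mathcal{D}=0)=z_0/n=1/n$ to lower-bound the one-trial success probability and obtain the explicit geometric tail $\prob\big(D_k^{(\beta)}>k-1+\epsilon\big)\le(1-\tfrac{1}{2n})^{\beta}$; and a monotonicity-plus-convergence argument to pass from convergence in probability to convergence of expectations. What your approach buys is rigor and a quantitative rate; what the paper's buys is brevity and the structural picture (complete bipartite inter-domain connectivity in the limit), which it also reuses to explain the empirical saturation of APL in large $\beta$ in the evaluation section. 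One small refinement to your last step: the edge-adding coupling makes the \emph{graph-theoretic} shortest path non-increasing in $\beta$, whereas $D_k^{(\beta)}$ in the paper is the recursively defined abstraction (a minimum of i.i.d.\ copies); for that object the same monotonicity holds in the stochastic order (the minimum of $\beta+1$ i.i.d.\ copies is stochastically dominated by the minimum of $\beta$, propagated through the recursion by induction), and since the limit is the constant $k-1$, stochastic monotonicity together with domination by the integrable $D_k^{(1)}=\sum_{i=1}^{k}\mathcal{D}_i+k-1$ gives uniform integrability and hence $L_k(\beta)\downarrow k-1$. This is a patch of wording, not a gap in the argument.
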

\begin{proof}
In a bus network with $k$ domains $\mathcal{A}_1,\mathcal{A}_2,\dots,\mathcal{A}_k$, when $\beta \to \infty$, every node in domain $\mathcal{A}_i$ directly connects to all nodes in domain $\mathcal{A}_{i+1}$ ($i\leq k-1$). As a result, the APL within each domain on a bus network is $0$. Thus, the APL is the sum weight of all traversed inter-domain edges, which is $k$ for $L_{k+1}(\beta)$ and $k-1$ for $L_k(\beta)$. 
\end{proof}

Note that one implicit assumption for Theorem~\ref{thm:shortIsGood} is that the domain-wise path associated with the constructed path is a \emph{simple path}, i.e., a path without repeated vertices. To show that visiting more domains cannot construct a shorter end-to-end path, we still need to prove that visiting one domain more than once is also disadvantageous. To this end, we define $L'_k(\beta)$ which is similar to $L_k(\beta)$ except that the corresponding domain-wise path contains repeated vertices.
\begin{corollary}
\label{cor:repeatedDomains}
For the two-layer network model, $L_k(\beta)<L'_{k'}(\beta)$ for $k'>k$.
\end{corollary}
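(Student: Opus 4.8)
The plan is to reduce the vertex-repeating case to the simple-path case already settled by Theorem~\ref{thm:shortIsGood}. I interpret $k'$ as the total number of domain traversals along the (non-simple) domain-wise walk, counted with multiplicity, so that $k'>k$ captures the idea that a wandering path that revisits some domain is ``longer'' than the direct simple path through $k$ distinct domains. The strategy is to establish the two-step chain $L_k(\beta) < L_{k'}(\beta) \leq L'_{k'}(\beta)$: the left inequality is the monotonicity of $L_k(\beta)$ in $k$ from Theorem~\ref{thm:shortIsGood}, and the right inequality says that forcing a walk of total length $k'$ to revisit domains cannot make it shorter in expectation than a simple bus network of the same length $k'$.

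The crux is the inequality $L'_{k'}(\beta)\geq L_{k'}(\beta)$. First I would decompose the end-to-end length along a walk $\mathcal{A}_{i_1}\to\mathcal{A}_{i_2}\to\cdots\to\mathcal{A}_{i_{k'}}$ into $k'$ intra-domain segments (one per domain-slot, between the entry and exit gateways selected at that slot) together with $k'-1$ inter-domain edges, each of weight $1$. The key observation, mirroring the reasoning in the proof of Theorem~\ref{thm:shortIsGood}, is that because gateways are selected independently and uniformly at random and all domains share identical intra-domain statistics, each intra-domain segment has a distance governed by the same distribution as $\mathcal{D}$ (or $M^{(\beta)}$ at an optimized crossing) whether its domain is fresh or revisited. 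Revisiting therefore yields no systematic shortcut; since all intra-domain weights are nonnegative, re-traversing a domain can only add nonnegative expected weight. Hence the walk's expected length is at least that of a simple bus of length $k'$, which is exactly $L_{k'}(\beta)$.

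For the monotonicity step I would invoke Theorem~\ref{thm:shortIsGood} iteratively: when $k\geq 3$ all intermediate indices up to $k'$ exceed $2$, so $L_k(\beta)<L_{k+1}(\beta)<\cdots<L_{k'}(\beta)$ and thus $L_k(\beta)<L_{k'}(\beta)$. The two exceptional cases are handled exactly as in the discussion following that theorem: $k=1$ is excluded because we only route between nodes in distinct domains, while the $k=2$ configuration occurs with the negligible probability $z'_1/m$ (recall $z'_2\gg z'_1$ and $m$ is large), so the strict inequality holds in the regime of interest.

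I expect the main obstacle to be making $L'_{k'}(\beta)\geq L_{k'}(\beta)$ fully rigorous, for two reasons. First, the per-crossing minimization over the $\beta$ parallel inter-domain links makes the expected length a nested, non-additive functional rather than a plain sum, so I cannot simply add per-segment expectations at the outermost level. Second, the two traversals of a revisited domain share the same random intra-domain topology and are therefore statistically dependent. I would address the dependence by appealing to linearity of expectation, which equates the expected total length with the sum of the per-segment expectations even without independence, and I would handle the nested min-structure by a coupling that matches each inter-domain crossing of the walk with the corresponding crossing of the length-$k'$ bus, so that the $\beta$-way optimized choice available to the walk is dominated in distribution by the choice available to the simple bus, preserving the desired inequality.
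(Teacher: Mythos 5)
Your proposal is correct and takes essentially the same route as the paper: the paper likewise proves the chain $L_k(\beta)<L_{k'}(\beta)<L'_{k'}(\beta)$, getting the first inequality by iterating Theorem~\ref{thm:shortIsGood} and the second by comparing a non-simple domain-wise walk of length $k'$ against a simple bus network of the same length. Your coupling/dominance remark---that the revisiting walk's route options are stochastically dominated by those of a fresh bus because inter-domain links and intra-domain topology get reused---is precisely the paper's (equally informal) observation that domain repetition leaves ``effectively less inter-domain routing options,'' so your explicit flagging of the dependence and nested-minimum issues is, if anything, more careful than the paper's own treatment.
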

\begin{figure}[tb]
\centering
\includegraphics[width=3.0in]{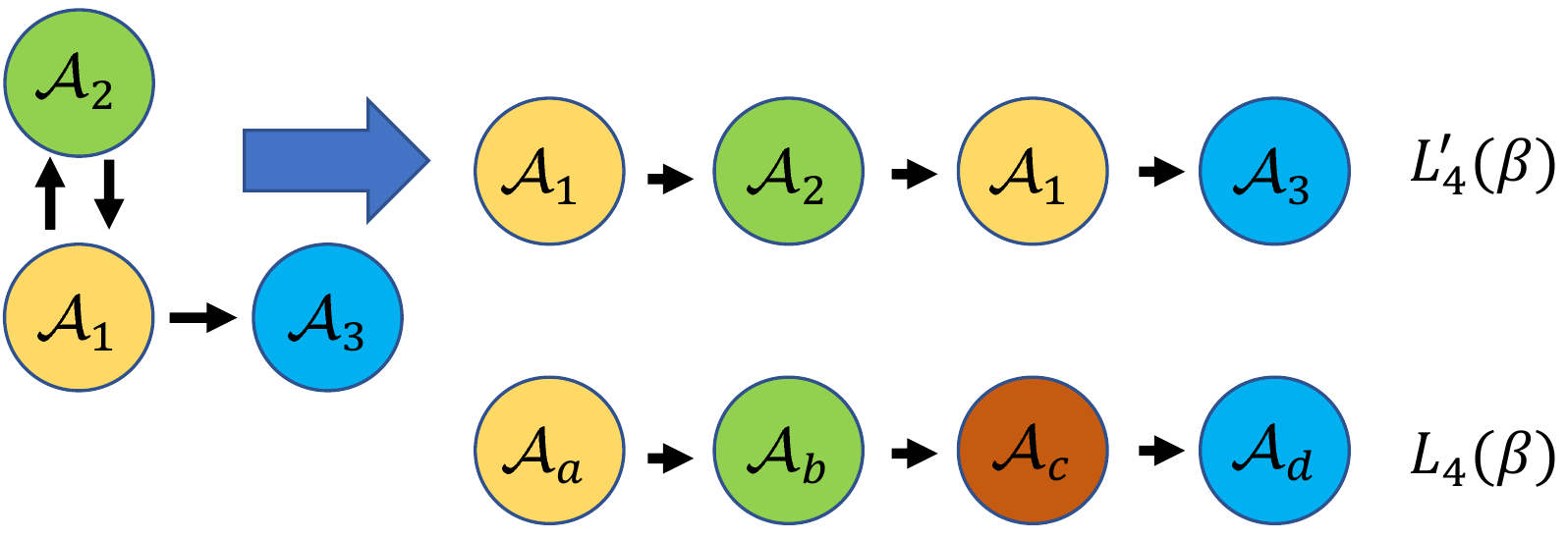}
\caption{Non-simple vs. simple domain-wise path for path constructions.}
\label{fig:proofexp}
\end{figure} 

\begin{proof}
We start the proof by comparing $L'_{k'}(\beta)$ and $L_{k'}(\beta)$. We consider the simplest form of domain repetition where only one domain is traversed twice.  We use Fig.~\ref{fig:proofexp} to facilitate the proof, where $k'=4$. Suppose that a random node in domain $\mathcal{A}_1$ needs to communicate with a random node in domain $\mathcal{A}_3$ and the selected domain-wise path is $\mathcal{A}_1-\mathcal{A}_2-\mathcal{A}_1-\mathcal{A}_3$. Apparently, this is not a simple path because domain $\mathcal{A}_1$ is traversed twice; the corresponding APL is denoted by $L'_{4}(\beta)$. 
We also consider a similar scenario with a simple domain-wise path of the same domain-wise distance, $\mathcal{A}_a-\mathcal{A}_b-\mathcal{A}_c-\mathcal{A}_d$, whose APL is denoted by $L_{4}(\beta)$. 
We observe that the computation of $L'_{4}(\beta)$ is almost the same as that in $L_{4}(\beta)$ except that there are effectively less inter-domain routing options. Hence, $L'_{k'}(\beta)>L_{k'}(\beta)$. Such analysis remains valid in cases where there are more repeated domains in the domain-wise path.
We also know from Theorem~\ref{thm:shortIsGood} that $L_{k'}{(\beta)}>L_{k'-1}{(\beta)}$. Finally, since $k'>k$, we have $L_k(\beta)<L'_{k'}(\beta)$, completing the proof.
\end{proof}

Theorem~\ref{thm:shortIsGood} together with Corollary~\ref{cor:repeatedDomains} suggest that for any source-destination node pair residing in different domains, the optimal path between them traverses the minimum number of domains with high probability. Therefore, when the shortest domain-wise path between two nodes contain $k$ vertices, then we can use $L_k(\beta)$ to approximate the corresponding optimal APL. Thus, let $L_{\CS}$ denote the APL for CS. We have
\begin{equation}
\label{eq:APL_CS}
L_{\CS} \approx \sum_{k=2}^{h'_{\max}+1}L_k{(\beta)}z'_{k-1}/m=\sum_{k=2}^{h'_{\max}+1}\mathbb{E}[D_k^{(\beta)}]z'_{k-1}/m.
\vspace{-1em}
\end{equation}
\subsection{Efficient Computation of $\mathbb{E}[D_k^{(\beta)}]$}
\label{sec:efficientEDk}
The computation of $L_{\CS}$ in (\ref{eq:APL_CS}) relies on $\mathbb{E}[D_k^{(\beta)}]$. Since $D_k^{(\beta)}$ is defined in a recursive way, it is expensive to compute the exact value of $\mathbb{E}[D_k^{(\beta)}]$. As such, we establish an efficient strategy to estimate $\mathbb{E}[D_k^{(\beta)}]$. Specifically, let $\mathcal{D}_1,\mathcal{D}_2,\ldots,\mathcal{D}_k$ denote i.i.d. random variables following the same distribution as $\mathcal{D}$. Then we define random variable $Z^{(k)}:=\sum_{i=1}^k\mathcal{D}_i+k-1$. For the two-layer network model, when the length of the bus network is increased by $1$, the number of path options w.r.t. two random nodes at the end-domains grows $\beta$-fold. Therefore, let $Z^{(k)}_1,Z^{(k)}_2,\ldots,Z^{(k)}_{\beta^{k-1}}$ be i.i.d. random variables following the same distribution as $Z^{(k)}$. Define $\widetilde{D}_k^{(\beta)}:=\min(Z^{(k)}_1,Z^{(k)}_2,\ldots,Z^{(k)}_{\beta^{k-1}})$. We then use $\mathbb{E}[\widetilde{D}_k^{(\beta)}]$ to approximate $\mathbb{E}[D_k^{(\beta)}]$. Since $\widetilde{D}_{k}^{(\beta)}$ does not rely on $\widetilde{D}_{k-1}^{(\beta)}$, $\mathbb{E}[\widetilde{D}_k^{(\beta)}]$ is easily computable using the method in (\ref{eq:cdfD1beta}--\ref{eq:firstDomainSS}). Such efficient approximation method is accurate, which is discussed in the following section.

\section{Evaluations}
\label{sec:evaluations}

To evaluate our analytical results of distributed SDN for various synchronization scenarios, we conduct a series of experiments on network topologies generated from both real and synthetic datasets. The focus in this section is two-fold. First, we validate the accuracy of the derived $L_{\MS}$, $L_{\SS}$, $L_{\PS}$, and $L_{\CS}$. We compare these theoretical results with the actual APLs collected from the above networks. Second, we aim to understand to what extent synchronization levels and network structures affect  APLs, i.e., to justify the benefit of the synchronization cost and the network structural design.
\subsection{Network Realizations}
\subsubsection{Network Topologies Based on Real Datasets}
To generate network topologies based on real datasets, we need the degree distributions as the input. Specifically, we use the real datasets collected by the University of Oregon Route Views Project (Routeview project) \cite{uoregon}, the Rocketfuel project\cite{Rocketfuel}, and the CAIDA project \cite{CAIDA} for input degree distributions.  

Given a specific degree distribution, one graph realization is generated in the following way: We assign each vertex (the total number of vertices is given) a target degree according to the degree distribution. We then select two vertices randomly and add an edge between them; the number of edges added w.r.t. each vertex is then recorded. If the degree target w.r.t. a vertex is met, this vertex will not be selected again to connect with other vertices. Such process repeats until all vertices reach their degree targets.  

\subsubsection{Network Topologies Based on Synthetic Models}
We select Barab\'{a}si-Albert \cite{barabasi1999emergence} and Erd\"{o}s-R\'{e}nyi  \cite{erdos1960evolution} models to generate network topologies.

\begin{enumerate}[(a)]
\item \emph{Barab\'{a}si-Albert (BA) model:} BA model starts with a small connected graph of a few nodes/edges. Then, we sequentially add new nodes in the following way: For each new node $v$, we connect $v$ to $\varrho$ existing nodes such that the probability of connecting to node $w$ is proportional to the degree of $w$. If the number of existing nodes is smaller than $\varrho$, then $v$ connects to all existing nodes. Vertex degree for the BA model follows a near power-law distribution. BA graphs can be used to model many naturally occurring networks, e.g., Internet and social networks.

\item \emph{Erd\"{o}s-R\'{e}nyi (ER) model:} For the ER model, the graph is generated by independently adding an edge between two nodes with a fixed probability $p$. The result
is a purely random topology where all graphs with an equal
number of links are equally likely to be selected. Vertex degree under ER follows a binomial distribution.
\end{enumerate}

Then, intra- and inter-domain topologies are generated based on the above network realization methods; see~\ref{sec:evaSet} for details. Next, on top of the generated inter-domain topologies, gateway connections are constructed according to parameter $\beta$. Moreover, we pick $p=0.015$ for ER graphs, and $\varrho = 1$ for BA graphs. In addition, all intra-domain links follow the same weight distribution, extracted from the Rocketfuel project, with the weight ranging from 1 to 16 and the expectation and variance being 3.2505 and 4.5779, respectively. 

\emph{Remark:} It should be noted that the above network realizations are only for the evaluation purpose. Our developed analytical results are generic, and are not restricted to any specific topological conditions. 

\begin{figure*}
\centering
\begin{subfigure}[b]{0.32\textwidth}
    \centering
    \includegraphics[width=\linewidth]{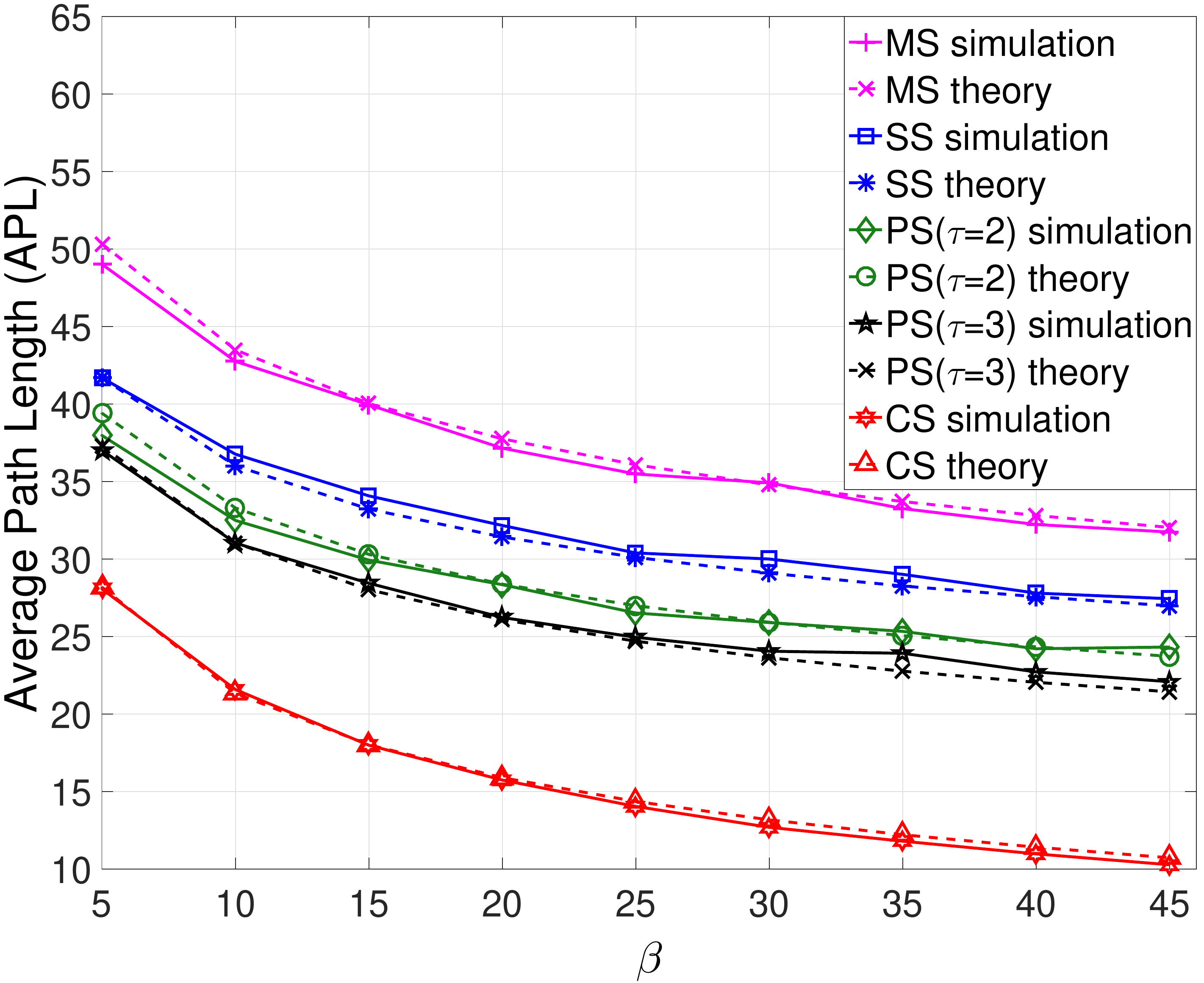}
    \vspace{-2em}
    \caption{\begin{footnotesize}Case~1 (intra-domain: Routeview-March 31, 2001; inter-domain: CAIDA-AS27524; edge weight: Rocketfuel). \end{footnotesize}}
    \label{fig:figTopology1}
\end{subfigure}\hfill
\begin{subfigure}[b]{0.32\textwidth}
    \centering
    \includegraphics[width=\linewidth]{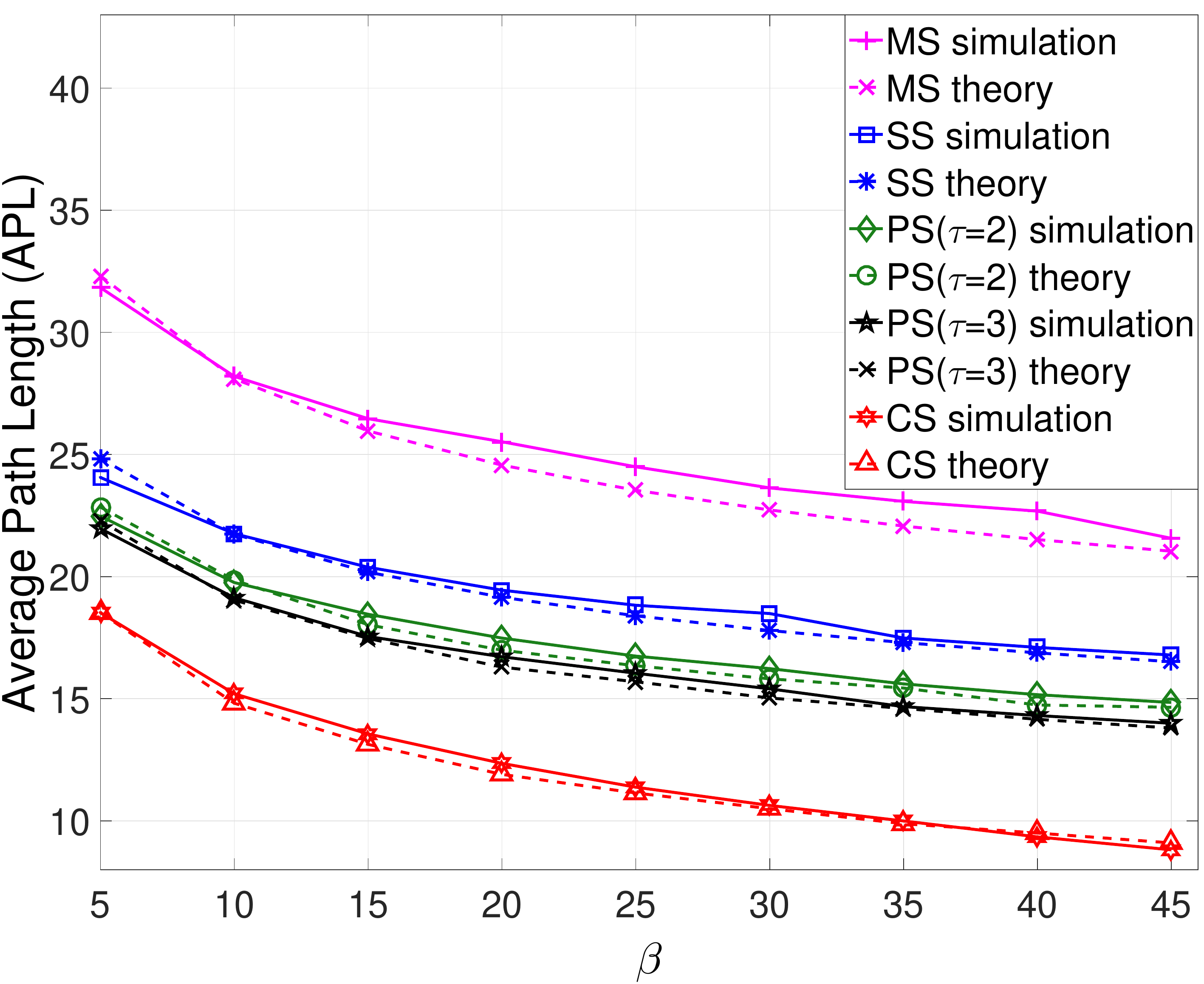}
    \vspace{-2em}
    \caption{\begin{footnotesize} Case~2 (intra-domain: Rocketfuel-AS1239; inter-domain: CAIDA-AS27524; edge weight: Rocketfuel). \end{footnotesize}}
    \label{fig:figTopology2}
\end{subfigure}\hfill
\begin{subfigure}[b]{0.32\textwidth}
    \centering
    \includegraphics[width=\linewidth]{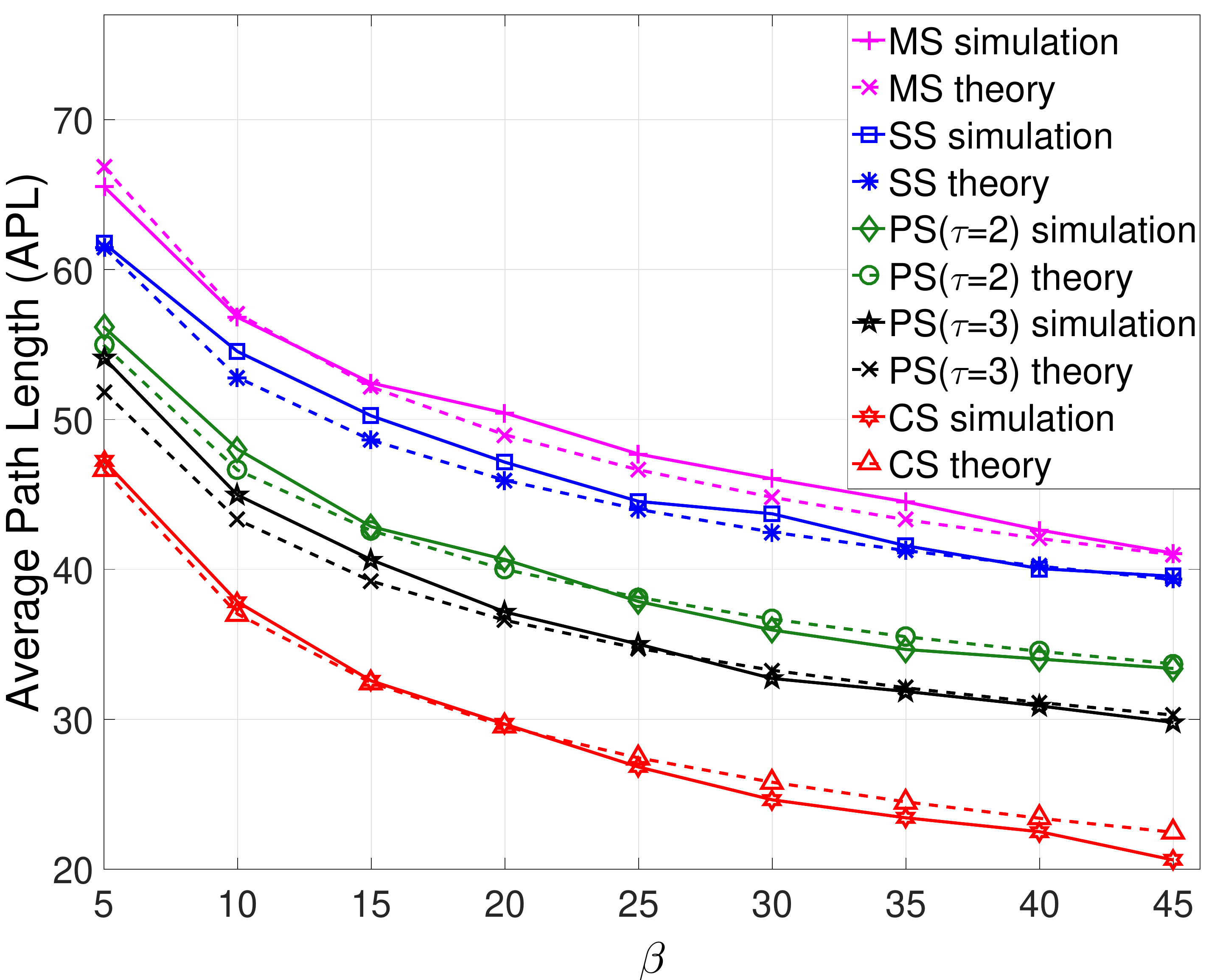}
    \vspace{-2em}
    \caption{\begin{footnotesize} Case~3 (intra-domain: Barab\'{a}si-Albert model; inter-domain: Erd\"{o}s-R\'{e}nyi model; edge weight: Rocketfuel). \end{footnotesize}}
    \label{fig:figTopology3}
\end{subfigure}
\caption{APL under different simulation cases.}
\label{fig:three-graphs}
\vspace{-2em}
\end{figure*}

\vspace{-.31em}
\subsection{Evaluation Settings}
\label{sec:evaSet}
\vspace{-.3em}
Three evaluation cases are studied: (i) Case~1, where we use degree distribution extracted from data collected on March 31, 2001 by the Routeview project to generate intra-domain topology. The inter-domain degree distribution used is based on AS27524 in CAIDA; (ii) Case~2, where the intra-domain degree distribution is calculated from AS1239 in the Rocketfuel project, and the inter-domain degree distribution is also based on AS27524 in  CAIDA; (iii) Case~3, where all intra-domain topologies are BA graphs and the inter-domain topology is an ER graph. 
For each case, the two-layer network consists of $100$ domains, each containing $200$ nodes, i.e., $m=100$ and $n=200$. For a given $\beta$, $30$ two-layer networks are realized. In each network realization, $50$ source-destination pairs (in different domains) are randomly selected to construct paths between them with MS, SS, PS, and CS.
In addition, for PS, two special cases, i.e., $\tau=2$ and $\tau=3$, are studied to compare against other synchronization scenarios. It should be noted that these settings are determined arbitrarily, as our analytical model does not require the input degree distributions to have certain patterns/properties.

\vspace{-.3em}
\subsection{Evaluation Results}
\vspace{-.3em}
The simulated APL averaged over all network realizations and source-destination node pairs are reported in Fig.~\ref{fig:figTopology1}-\ref{fig:figTopology3}, each of the evaluation cases, respectively. In these figures, each curve is also accompanied by our developed theoretical performance estimation.

\subsubsection{Accuracy of the Theoretical Results}
Evaluations of various real/synthetic networks in Fig.~\ref{fig:three-graphs}  confirm the high accuracy of our theoretical results in predicting the performance metric APL in distributed SDN networks. Specifically, the simulation curves can be closely approximated by the theoretical results for all values of $\beta$ and synchronization scenarios. Moreover, the theoretical results for PS and CS are obtained by the efficient computation method in Section~\ref{sec:efficientEDk}. Fig.~\ref{fig:three-graphs} shows that even such simplified method for approximating $L_k(\beta)$ exhibits high accuracy. 
Intuitively, this is because the process of establishing inter-domain connections in our network model is purely random, which enables us to use $\beta^{k-1}$ (see Section~\ref{sec:efficientEDk} for details) to estimate the number of route options between two random nodes in the end-domains of a bus network of length $k$.   

\subsubsection{APL Variations for Different Synchronization Levels and Structural Parameters}
Fig.~\ref{fig:three-graphs} confirms that the APL in distributed SDN is related to the amount of information available to the controllers, i.e., synchronization levels. As expected, higher synchronization levels is superior in reducing APLs. In particular, APL for CS corresponds to the minimum APL that is achievable in all cases, i.e., a lower bound. By contrast, the results for MS act as an upper bound due to the minimum intra-/inter-domain information availability. Since the APL for MS is expressed as a logarithmic function (\ref{eq:L_MS}), Fig.~\ref{fig:three-graphs} shows that even with the minimum synchronization level, APL is still significantly smaller than the network size ($20,000$ nodes in total) when all edge weights are at least $1$. 
Fig.~\ref{fig:three-graphs} shows that comparing to MS, the APL reduction for CS can be up to $60\%$. Moreover, comparing to MS, only intra-domain link weight information is available to SS. Nevertheless, such additional information is able to reduce APL by up to $50\%$. However, when more synchronized information is available, the reduction in APL starts to reduce (i.e., diminishing return). In particular, for PS, comparing against the case of $\tau=2$, the APL reduction for $\tau=3$ is rather minimal, especially when $\beta$ is small. Consequently, it is expected that with the increase of $\tau$, the benefit to cost ratio declines sharply.
Finally, as $\beta$ is a structural parameter, we observe that the network performance improves when $\beta$ increases. This is intuitive as a large $\beta$ directly renders the probability of finding a shorter route to be notably high, as there exist more inter-domain connections available for the routing mechanisms to choose from. Furthermore, Fig.~\ref{fig:three-graphs} also demonstrates that APL converges to a certain value when $\beta$ is large, which can be explained by Corollary~\ref{cor:beta=infty}. In summary, these evaluation results reveal that in distributed SDN, the performance improvement space is only marginal when the network exhibits high synchronization level and contains a large number of gateways in each domain. Such constraints need to be addressed in practical network design and optimizations.

\section{Conclusions}
\label{sec:Conclusions}
We have studied the performance of distributed SDN networks for different inter-domain synchronization levels and network structural properties from the analytical perspective. For this goal, a generic network model is first proposed to capture key attributes in distributed SDN networks. Based on this model, we have developed analytical results to quantify the performance of the constructed paths for four canonical synchronization scenarios. Extensive simulations on both real and synthetic networks show that our developed analytical results exhibit high accuracy while also providing significant insights into the relationship between the network performance and operational tradeoffs, which are vital to future network architecture and protocol design.\looseness=-1 

\bibliographystyle{IEEEtran}
\bibliography{reference}
\end{document}